\newtheorem{theorem}{Theorem}[section]
\newtheorem{lemma}[theorem]{Lemma}
\newtheorem{proposition}[theorem]{Proposition}
\newtheorem{corollary}[theorem]{Corollary}
\theoremstyle{definition}
\newtheorem{definition}[theorem]{Definition}
\newtheorem{example}[theorem]{Example}
\theoremstyle{remark}
\newtheorem{remark}[theorem]{Remark}
\numberwithin{equation}{section}
\begin{document}
\title[On cyclic DNA codes over $\mathbb{F}_2+u\mathbb{F}_2+u^2\mathbb{F}_2$]{On cyclic DNA codes over $\mathbb{F}_2+u\mathbb{F}_2+u^2\mathbb{F}_2$}
\author[Hojjat Mostafanasab and Ahmad Yousefian Darani]{Hojjat Mostafanasab and Ahmad Yousefian Darani}

\subjclass[2010]{94B05, 94B15}
\keywords{Cyclic DNA codes, Reversible cyclic codes, Reversible-complement cyclic codes, Watson-Crick model.}

\begin{abstract}
In the present paper we study the structure of cyclic DNA codes of even length over the ring $\mathbb{F}_2+u\mathbb{F}_2+u^2\mathbb{F}_2$ where $u^3=0$. 
We investigate two presentations of cyclic codes of even length over 
$\mathbb{F}_2+u\mathbb{F}_2+u^2\mathbb{F}_2$ satisfying the reverse 
constraint and the reverse-complement constraint.
\end{abstract}

\maketitle

\section{Introduction}
Deoxyribonucleic acid (DNA) is a nucleic acid containing the genetic instructing used as the carrier of genetic information in all living organisms. DNA sequences consists of four bases nucleotides: Adenine ($A$), guanine ($G$), thymine ($T$) and cytosine ($C$). A single DNA
strand is an ordered quaternary sequence of the letters $A$, $G$, $T$, and $C$ with chemically distinct polar terminals known as the
5- and 3-ends. These strands are paired with each other as a double helix. This pairing is done by obeying the Watson-Crick
model. According to this model, $A$ and $T$ bound to each other and $G$ and $C$ bound to each other. $A$ and $G$ are called the
complements of $T$ and $C$, respectively or vice versa. The complement of a base say $X$ will be denoted by $\overline{X}$, for instance, the complement of $A$ is $\overline{A}=T$. As a general example, if $X=(GCATAG)$ is a DNA strand, then its complement is $\overline{X}=(CGTATC)$. DNA strand pairing is done in
the opposite direction and the reverse order. For instance, the Watson-Crick complementary
(WCC) strand of $3'-ACTTAGA-5'$ is the strand $5'-TCTAAGT -3'$. 

The interest on DNA computing started by the pioneer paper written by Leonard Adleman \cite{la}. Adleman
solved a hard (NP-complete) computational problem by DNA molecules in a test tube. 
Also, Richard J. Lipton in \cite{rj} solved a hard computational problem using the structure of DNA. 

Cyclic codes over finite rings played a very important role in the area of error correcting codes, see  
\cite{ao,absi,d,vz,bn,yk}. Since then, the structure of DNA is used as a model for constructing good error correcting codes and
conversely error correcting codes that enjoy similar properties with DNA structure are also
used to understand DNA itself. Gaborit and King in \cite{pd} discussed linear construction of DNA codes. In \cite{tax}, DNA codes over finite fields with four elements were studied by Abualrub et al. Later, Siap et al. studied DNA codes over the finite ring $\mathbb{F}_2[u]/<u^2-1>$ with four elements in \cite{ita}. Yildiz and Siap in \cite{YS} studied DNA codes over the ring $\mathbb{F}_2[u]/<u^4-1>$ with 16 elements. Some other papers devoted to cyclic DNA codes, see 
\cite{aei,ktp,ktp2,OSB}.

Liang and Wang \cite{LW}, investigated cyclic DNA codes over the ring $\mathbb{F}_2+u\mathbb{F}_2$ where $u^2=0$.
Cyclic codes over the ring $\mathbb{F}_2+u\mathbb{F}_2+u^2\mathbb{F}_2$ have been discussed in series of paper \cite{absi,Ceng,Cen,QZZ,SK}. Here, we study a family of cyclic DNA  codes of the finite ring $\mathbb{F}_2+u\mathbb{F}_2+u^2\mathbb{F}_2$ with 8 elements. Mainly, we discuss on cyclic codes of even length over $\mathbb{F}_2+u\mathbb{F}_2+u^2\mathbb{F}_2$ satisfying the reverse constraint and the reverse-complement constraint. 
Also, we determine the relationship between the generators of cyclic codes over $\mathbb{F}_2+u\mathbb{F}_2+u^2\mathbb{F}_2$ and their duals.

\section{Preliminaries}
A {\it linear code} $\mathcal{C}$ of length $n$ over a commutative ring $R$ is an $R$-submodule of $R^n$. 
An element of $\mathcal{C}$ is called a {\it codeword}.
A code of length $n$ is {\it cyclic} if the code is invariant under the automorphism $\sigma$ with
$\sigma(c_0, c_1, \cdots , c_{n-1}) = (c_{n-1}, c_0, \cdots , c_{n-2})$.
It is well known that a cyclic code of length $n$ over $R$ can be identified with an ideal in the quotient ring $R[x]/\langle x^n-1\rangle$ via the $R$-module isomorphism
as follows:
\begin{center}
	$R^n \longrightarrow R[x]/\langle x^n-1\rangle$\\
	$(c_0,c_1,\dots,c_{n-1}) \mapsto c_0+c_1x+\cdots+c_{n-1}x^{n-1}~ (\text{mod}~ \langle x^n-1\rangle) $
\end{center}

There are 16 pairs constructed by four basic nucleotides $A,T,G ~\text{and}~C$ such as 
\begin{center}
	$AA,TT,GG,CC,AT,TA,GC,CG,GT,TG,AC,CA,CT,TC,AG,GA$.
\end{center}

Let $\mathcal{R}$ be the ring $\mathbb{F}_2+u\mathbb{F}_2+u^2\mathbb{F}_2=\{0, 1, u, u^2, 1+u, 1+u^2, 1+u+u^2, u+u^2\}$ where $u^3=0$ mod 2. 
Since the ring $\mathcal{R}$ is of the cardinality 8, then we define the map $\Phi$ which gives a one-to-one correspondence between the elements of $\mathcal{R}$ and the 8 codons  which are given in Table 1. 
\begin{center}
	{\bf Table 1.} Identifying 8 codons with the elements of the ring $\mathcal{R}$.\\~\\
	\begin{tabular}{ l  c  c  c  c  c  c  c  c }
		\hline
		
		$GC$ & $0$ & $CG$ & $u^2$  \\
		$AT$ & $1$ & $TA$ & $1+u^2$  \\
		$GT$ & $u$ & $CA$ & $u+u^2$  \\
		$TG$ & $1+u$ & $AC$ & $1+u+u^2$ \\
		\hline
		
	\end{tabular}
\end{center}

Let $X=x_0x_1 \cdots x_{n-1} \in \mathcal{R}^n$ be a vector. The reverse of $X$ is defined as $X^r = x_{n-1}x_{n-2} \cdots x_1x_0$, the complement of $X$
is $X^c =\overline{x_0}~\overline{x_1}\cdots \overline{x_{n-1}} $, and the reverse-complement, also called the Watson-Crick complement (WCC) is defined as $X^{rc} =\overline{x_{n-1}}~\overline{x_{n-2}} \cdots \overline{x_1}~\overline{x_0} $.
\begin{definition}
A linear code $\mathcal{C}$ of length $n$ over $\mathcal{R}$ is said to be {\it reversible} if $X^r \in \mathcal{C}$ for all $X \in \mathcal{C}$, {\it complement} if $X^c \in \mathcal{C}$ for all $X \in \mathcal{C}$ and {\it reversible-complement} if $X^{rc} \in \mathcal{C}$ for all $X\in\mathcal{C}$.	
\end{definition}

\begin{definition}
	A linear code $\mathcal{C}$ of length $n$ over $\mathcal{R}$ is called a {\it DNA code} if
	\begin{enumerate}
		\item $\mathcal{C}$ is a cyclic code, i.e. $\mathcal{C}$ is an ideal of $\mathcal{R}[x]/\langle x^n-1\rangle$; and
		\item For any codeword $X\in \mathcal{C}, X\neq X^{rc}$	and $X^{rc} \in \mathcal{C}$.
	\end{enumerate}
\end{definition}

\begin{theorem}$($\cite[Theorem 2(2)]{absi}$)$
Let $\mathcal{C}$ be a cyclic code of even length $n$ over $\mathcal{R}$. Then
\begin{enumerate}
\item $\mathcal{C}=\langle g(x)+up_1(x)+u^2p_2(x)\rangle$ 
where $g(x),p_i(x)\in\mathbb{F}_2[x]$ with $g(x)\mid(x^n-1)$ mod 2, $\big(g(x)+up_1(x)\big)\mid(x^n-1)$
in $\mathbb{F}_2+u\mathbb{F}_2$ and $\big(g(x)+up_1(x)+u^2p_2(x)\big)\mid(x^n-1)$ in $\mathcal{R}$ and $deg(p_2(x))<deg(p_1(x))$. Or
\item $\mathcal{C}=\langle g(x)+up_1(x)+u^2p_2(x), u^2a_2(x)\rangle$ 
with $a_2(x)\mid g(x)\mid(x^n-1)$ mod 2 and $\big(g(x)+up_1(x)\big)\mid(x^n-1)$
in $\mathbb{F}_2+u\mathbb{F}_2$. Moreover, $g(x)\mid p_1(x)\left(\frac{x^n-1}{g(x)}\right)$, 
$a_2(x)\mid p_1(x)\left(\frac{x^n-1}{g(x)}\right)$, $a_2(x)\mid p_2(x)\left(\frac{x^n-1}{g(x)}\right)^2$
and $deg(p_2(x))<deg(a_2(x))$. Or
\item $\mathcal{C}=\langle g(x)+up_1(x)+u^2p_2(x), ua_1(x)+u^2q(x), u^2a_2(x)\rangle$ 
with $a_2(x)\mid a_1(x)\mid g(x)\mid(x^n-1)$ mod 2, $a_1(x)\mid p_1(x)\left(\frac{x^n-1}{g(x)}\right)$, 
$a_2(x)\mid q(x)\left(\frac{x^n-1}{a_1(x)}\right)$ and $a_2(x)\mid p_2(x)\left(\frac{x^n-1}{g(x)}\right)\left(\frac{x^n-1}{a_1(x)}\right)$.
Moreover, $deg(p_1(x))<deg(a_1(x))$, $deg(p_2(x))<deg(a_2(x))$ and $deg(q(x))<deg(a_2(x))$.
\end{enumerate}
\end{theorem}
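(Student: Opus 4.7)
My plan is to exploit the fact that $\mathcal{R}=\mathbb{F}_2[u]/\langle u^3\rangle$ is a finite chain ring with ideal chain $0\subset u^2\mathcal{R}\subset u\mathcal{R}\subset \mathcal{R}$, so that $\mathcal{R}[x]/\langle x^n-1\rangle$ inherits a three-step filtration whose associated graded pieces are all copies of the principal ideal ring $\mathbb{F}_2[x]/\langle x^n-1\rangle$. To any cyclic code $\mathcal{C}$ I would associate three $\mathbb{F}_2$-cyclic codes
\begin{align*}
\mathcal{C}_0 &= \{\bar a(x):a(x)+up(x)+u^2q(x)\in \mathcal{C}\text{ for some }p,q\in\mathbb{F}_2[x]\},\\
\mathcal{C}_1 &= \{\bar a(x):ua(x)+u^2q(x)\in \mathcal{C}\text{ for some }q\in\mathbb{F}_2[x]\},\\
\mathcal{C}_2 &= \{\bar a(x):u^2 a(x)\in \mathcal{C}\},
\end{align*}
which are ideals of $\mathbb{F}_2[x]/\langle x^n-1\rangle$ generated by divisors $g(x),a_1(x),a_2(x)$ of $x^n-1\pmod 2$. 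Multiplication by $u$ yields the chain $\mathcal{C}_0\supseteq \mathcal{C}_1\supseteq \mathcal{C}_2$, hence $a_2\mid a_1\mid g\mid(x^n-1)\bmod 2$.

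Next I would pick lifts $g(x)+up_1(x)+u^2p_2(x)$, $ua_1(x)+u^2q(x)$ and $u^2a_2(x)$ in $\mathcal{C}$, and normalise them by subtracting multiples of later generators to enforce the degree bounds $\deg p_1<\deg a_1$, $\deg p_2<\deg a_2$ and $\deg q<\deg a_2$. A valuation descent argument shows these three lifts generate $\mathcal{C}$: any $f\in\mathcal{C}$ projects to a multiple of $g$ in $\mathcal{C}_0$, so subtracting a suitable $\mathcal{R}[x]$-multiple of the first lift pushes the residue into $u\mathcal{R}[x]/\langle x^n-1\rangle$; the residue modulo $u^2$ is a multiple of $ua_1$, removable with the second lift; the remaining $u^2$-part is a multiple of $u^2a_2$.

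The divisibility conditions stem from a key computation: multiplying the first lift by $\tfrac{x^n-1}{g(x)}$ kills the leading term, producing a codeword
\[\tfrac{x^n-1}{g(x)}\bigl(g+up_1+u^2p_2\bigr)\equiv up_1\tfrac{x^n-1}{g}+u^2p_2\tfrac{x^n-1}{g}\pmod{x^n-1}\]
whose image modulo $u^2$ must lie in $u\mathcal{C}_1=\langle ua_1\rangle$, forcing $a_1(x)\mid p_1(x)\tfrac{x^n-1}{g(x)}$. Applying the analogous multiplication to the second lift by $\tfrac{x^n-1}{a_1(x)}$ gives $a_2\mid q\tfrac{x^n-1}{a_1}$, and combining the two multiplications on the first lift yields the condition on $p_2$. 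The three cases of the theorem then correspond to how many generators are genuinely needed: in the generic situation we obtain case~(3); when $a_1=g$ the middle lift becomes expressible via the first and we collapse to case~(2); if additionally $a_2=g$ the third also collapses to case~(1), whose hypothesis that $g+up_1+u^2p_2$ divides $x^n-1$ in $\mathcal{R}[x]$ is equivalent to the full annihilator of the remaining generator being principal.

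The main obstacle I anticipate is the bookkeeping across these three cases. When generators collapse one must verify that the stated divisibility and degree conditions sharpen accordingly --- for example, $a_1\mid p_1\tfrac{x^n-1}{g}$ becomes the stronger $g\mid p_1\tfrac{x^n-1}{g}$ of case~(2) precisely because $a_1=g$ --- and that the degree normalisations remain consistent after absorption. Tracking how the coefficient polynomials $p_1,p_2,q$ transform when one generator is absorbed into another, and checking that each case records exactly the necessary and sufficient relations without introducing redundancies, is the delicate part of the argument.
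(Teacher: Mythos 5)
This statement is not proved in the paper at all: it is quoted verbatim from Abualrub--Siap \cite[Theorem 2(2)]{absi}, so there is no internal proof to compare your attempt against. Judged on its own terms, your plan follows the same structural route as the cited source (and as the standard chain-ring literature): reduce to the three torsion/residue codes $\mathcal{C}_0\supseteq\mathcal{C}_1\supseteq\mathcal{C}_2$ over $\mathbb{F}_2[x]/\langle x^n-1\rangle$, lift their generators $g$, $a_1$, $a_2$, show by valuation descent that the lifts generate $\mathcal{C}$, normalise degrees by division, and obtain the divisibility conditions by multiplying the lifts by $\frac{x^n-1}{g}$ and $\frac{x^n-1}{a_1}$. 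That skeleton is right and does deliver the case-(3) conditions.

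The genuine gaps are in exactly the places you defer to ``bookkeeping.'' First, case (1) is the hard part of the theorem: you must prove that when the code needs only one generator, that generator can be chosen so that $g(x)+up_1(x)$ divides $x^n-1$ in $\mathbb{F}_2+u\mathbb{F}_2$ and $g(x)+up_1(x)+u^2p_2(x)$ divides $x^n-1$ in $\mathcal{R}$, with $\deg p_2<\deg p_1$; your one-line appeal to ``the full annihilator of the remaining generator being principal'' is an assertion, not an argument, and this divisibility does not follow formally from $a_1=a_2=g$. Second, the collapse from case (3) to case (2) is not automatic: if $a_1=g$, the element $u(g+up_1+u^2p_2)=ug+u^2p_1$ differs from $ua_1+u^2q$ by $u^2(p_1+q)$, so eliminating the middle generator requires showing $a_2\mid p_1+q$ (equivalently re-choosing $q$), which you do not address. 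Third, case (2) carries extra conditions you never derive, namely $a_2\mid p_1\left(\frac{x^n-1}{g}\right)$ and $a_2\mid p_2\left(\frac{x^n-1}{g}\right)^2$ (note the square), and the degree bound changes from $\deg p_1<\deg a_1$ to the case-(1) bound $\deg p_2<\deg p_1$; each of these needs its own computation, obtained by multiplying the single remaining generator by $\frac{x^n-1}{g}$ once and twice and tracking where the resulting codewords land in $\mathcal{C}_1$ and $\mathcal{C}_2$. Until those steps are written out, the proposal establishes the three-generator form and the chain $a_2\mid a_1\mid g\mid(x^n-1)$, but not the full trichotomy as stated.
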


For each polynomial $f(x)=a_0+a_1x+\cdots+a_rx^r$ with $a_r\neq 0$, {\it the reciprocal of} $f(x)$ is the polynomial
$f^*(x)=x^rf(1/x)=a_r+a_{r-1}x+\cdots+a_0x^r.$
It is easy to see that $deg(f ^*(x)) \leq deg(f (x))$ and if $a_0 \neq 0$, then $deg(f ^*(x)) = deg(f (x))$. 
A polynomial $f(x)$ is called {\it self-reciprocal} if $f^*(x) = f(x)$. It is easy to check that if $f(x),g(x)$ are two 
polynomials such that $f(x)\mid g(x)$, then $f^*(x)\mid g^*(x)$.

\begin{lemma}\label{star}
Let $R$ be a commutative ring and $f(x)$ be a polynomial in $R[x]/\langle x^n-1\rangle$ with $deg(f(x))=t$. Then
\begin{enumerate}
\item  $x^{t+1}f(x)^r=f^*(x)$ in $R[x]/\langle x^n-1\rangle$.
\item $x^{n-t-1}f^*(x)=f(x)^r$ in $R[x]/\langle x^n-1\rangle$.
\end{enumerate}
\end{lemma}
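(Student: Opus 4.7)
The plan is to prove both parts by direct computation, treating $f(x)^r$ as the polynomial that corresponds to the reverse of the length-$n$ coefficient vector of $f(x)$. More concretely, I will fix notation by writing $f(x)=a_0+a_1x+\cdots+a_tx^t$ with $a_t\neq 0$, and regard this as the vector $(a_0,a_1,\dots,a_t,0,\dots,0)\in R^n$ of length $n$. Applying the reverse map gives the vector $(0,\dots,0,a_t,a_{t-1},\dots,a_0)$, in which the first $n-t-1$ coordinates are zero, so
\[
f(x)^r \;=\; a_tx^{n-t-1}+a_{t-1}x^{n-t}+\cdots+a_0x^{n-1}.
\]
On the other hand, by the definition of the reciprocal, $f^*(x)=a_t+a_{t-1}x+\cdots+a_0x^t$.

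I would prove (2) first, since it is just a direct multiplication: multiplying $f^*(x)$ by $x^{n-t-1}$ shifts every monomial up by the exponent $n-t-1$, producing exactly $a_tx^{n-t-1}+a_{t-1}x^{n-t}+\cdots+a_0x^{n-1}$, which is the formula computed above for $f(x)^r$. Note this equality already holds in $R[x]$ (no reduction modulo $x^n-1$ is needed), since every resulting exponent is at most $n-1$.

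Part (1) then follows immediately from (2): multiplying both sides of $x^{n-t-1}f^*(x)=f(x)^r$ by $x^{t+1}$ gives
\[
x^{t+1}f(x)^r \;=\; x^{n}f^*(x) \;\equiv\; f^*(x) \pmod{x^n-1},
\]
since $x^n\equiv 1$ in $R[x]/\langle x^n-1\rangle$. Alternatively, one can verify (1) in the same direct way by multiplying the explicit expression for $f(x)^r$ by $x^{t+1}$ and reducing each exponent modulo $n$.

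There is no real obstacle here; the lemma is essentially a bookkeeping identity that makes precise the intuitive relationship between the combinatorial reverse of a coefficient vector and the algebraic reciprocal of a polynomial. The only point requiring a little care is that $f(x)^r$ depends on the ambient length $n$ (it pads $f$ with zeros up to length $n$ before reversing), whereas $f^*(x)$ depends only on $f$, and this is precisely what forces the shift factors $x^{t+1}$ and $x^{n-t-1}$ to appear.
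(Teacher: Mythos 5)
Your proof is correct and uses essentially the same direct bookkeeping computation as the paper: write out $f(x)^r$ explicitly as $a_tx^{n-t-1}+\cdots+a_0x^{n-1}$ and compare it with a shift of $f^*(x)$ modulo $x^n-1$. The only (immaterial) difference is that you establish part (2) first and deduce (1) from it, whereas the paper proves (1) directly and obtains (2) as a consequence.
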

\begin{proof}
(1) Assume that $f(x)=f_0+f_1x+\dots+f_tx^t$. Then $$f(x)^r=f_tx^{n-t-1}+\dots+f_1x^{n-2}+f_0x^{n-1}.$$
Hence $$x^{t+1}f(x)^r=f_t+f_{t-1}x+\dots+f_1x^{t-1}+f_0x^{t}=f^*(x).$$
(2) By part (1).
\end{proof}

\begin{proposition}
Let $\mathcal{C}$ be a cyclic code of lenght $n$ over a commutative ring $R$ and $f(x)\in R[x]/\langle x^n-1\rangle$. Then $f(x)^r\in\mathcal{C}$ if and only if $f^*(x)\in\mathcal{C}$.
\end{proposition}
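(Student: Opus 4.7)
The plan is to exploit Lemma \ref{star} together with the fact that, since $\mathcal{C}$ is cyclic, the associated subset of $R[x]/\langle x^n-1\rangle$ is an ideal, hence closed under multiplication by arbitrary powers of $x$. This reduces the proposition to a one-line deduction in each direction.

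For the forward implication, I would set $t=\deg(f(x))$ and assume $f(x)^r\in\mathcal{C}$. By Lemma \ref{star}(1), we have $x^{t+1}f(x)^r=f^*(x)$ in $R[x]/\langle x^n-1\rangle$. Since $\mathcal{C}$ is an ideal, $x^{t+1}f(x)^r\in\mathcal{C}$, so $f^*(x)\in\mathcal{C}$.

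For the converse, assume $f^*(x)\in\mathcal{C}$. By Lemma \ref{star}(2), $x^{n-t-1}f^*(x)=f(x)^r$ in $R[x]/\langle x^n-1\rangle$, and again the ideal property of $\mathcal{C}$ gives $f(x)^r\in\mathcal{C}$.

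There is no real obstacle here: the content of the proposition is already packaged in the two identities of Lemma \ref{star}, and the only thing to invoke beyond those identities is the ideal-closure of cyclic codes under multiplication by powers of $x$. The one point worth flagging is that the exponents $t+1$ and $n-t-1$ are nonnegative (the latter because any $f(x)$ viewed in $R[x]/\langle x^n-1\rangle$ can be taken of degree less than $n$), so the multiplications $x^{t+1}\cdot(-)$ and $x^{n-t-1}\cdot(-)$ are legitimate operations inside the quotient ring.
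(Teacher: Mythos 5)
Your proof is correct and follows exactly the route the paper intends: its proof is simply ``Use Lemma \ref{star},'' and you have spelled out that one-liner, applying parts (1) and (2) of the lemma together with the ideal-closure of a cyclic code under multiplication by powers of $x$.
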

\begin{proof}
Use Lemma \ref{star}.
\end{proof}

As a direct consequence of the above proposition, the following result follows:
\begin{corollary}
Let $\mathcal{C}$ be a linear code. Then $\mathcal{C}$ is reversible if and only if $f^*(x)\in\mathcal{C}$ for all $f(x)\in\mathcal{C}$.
\end{corollary}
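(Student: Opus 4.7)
The plan is to derive this corollary directly from the preceding proposition, with only a small bit of translation work between vector and polynomial language. First I would unpack the definition: under the standard $R$-module isomorphism sending $(c_0,c_1,\ldots,c_{n-1})$ to $c_0+c_1x+\cdots+c_{n-1}x^{n-1}$ in $\mathcal{R}[x]/\langle x^n-1\rangle$, the reverse $X^r$ of a codeword corresponds precisely to the polynomial $f(x)^r$. Thus reversibility of $\mathcal{C}$ is the statement that $f(x)^r\in\mathcal{C}$ for every $f(x)\in\mathcal{C}$.

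Next I would invoke the proposition just proved: for every $f(x)\in\mathcal{R}[x]/\langle x^n-1\rangle$, we have $f(x)^r\in\mathcal{C}$ if and only if $f^*(x)\in\mathcal{C}$. Chaining the two equivalences gives exactly the corollary: $\mathcal{C}$ is reversible iff $f^*(x)\in\mathcal{C}$ for all $f(x)\in\mathcal{C}$. Both directions are symmetric, so there is really only one argument to carry out.

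The one subtlety I would flag is that the proposition leans on Lemma \ref{star}, which realizes the passage between $f(x)^r$ and $f^*(x)$ via multiplication by a power of $x$ inside $\mathcal{R}[x]/\langle x^n-1\rangle$. Preserving membership in $\mathcal{C}$ under such multiplication requires cyclic invariance, so the corollary should be read in the standing cyclic context of the paper, where ``linear code'' is implicitly a cyclic code viewed as an ideal of $\mathcal{R}[x]/\langle x^n-1\rangle$. I do not anticipate any genuine obstacle: the result is essentially a rephrasing of the proposition through the dictionary relating vector reverse and polynomial reciprocal.
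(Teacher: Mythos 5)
Your proposal matches the paper, which simply states the corollary as a direct consequence of the preceding proposition via the identification of $X^r$ with $f(x)^r$. Your extra remark that the argument needs the cyclic (ideal) context, since the proposition's proof multiplies by powers of $x$ modulo $x^n-1$, is a correct and sensible clarification rather than a deviation.
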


\begin{lemma}\label{equality}
Let $f_i(x),g_i(x)\in \mathbb{F}_2[x]$ for $i=1,2,3$. If $$f_1(x)+uf_2(x)+u^2f_3(x)=g_1(x)+ug_2(x)+u^2g_3(x),$$
then $f_i(x)=g_i(x)$ for $i=1,2,3$.
\end{lemma}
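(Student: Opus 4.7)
The plan is to move everything to one side and prove the equivalent statement that if $h_1(x)+uh_2(x)+u^2h_3(x)=0$ with $h_i\in\mathbb{F}_2[x]$, then $h_1=h_2=h_3=0$. Setting $h_i(x):=f_i(x)-g_i(x)=f_i(x)+g_i(x)$ (recall we are in characteristic $2$), the hypothesis immediately reduces to this homogeneous version.

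Next, I would argue coefficient by coefficient. Writing $h_i(x)=\sum_k h_{i,k}x^k$ with $h_{i,k}\in\mathbb{F}_2$, the vanishing of the sum forces, for every index $k$, the identity
\[
h_{1,k}+u\,h_{2,k}+u^2 h_{3,k}=0 \quad \text{in } \mathcal{R}.
\]
So everything boils down to the scalar statement: the only way to write $0\in\mathcal{R}$ as $a+ub+u^2c$ with $a,b,c\in\mathbb{F}_2$ is $a=b=c=0$.

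For this last step I would invoke the explicit list of elements of $\mathcal{R}$ given in the paper: the eight elements $0,1,u,u^2,1+u,1+u^2,1+u+u^2,u+u^2$ are exactly the $2^3$ $\mathbb{F}_2$-linear combinations of $1,u,u^2$. Since $|\mathcal{R}|=8$ coincides with the number of such combinations, the map $(a,b,c)\mapsto a+ub+u^2c$ from $\mathbb{F}_2^3$ to $\mathcal{R}$ is a bijection, i.e., $\{1,u,u^2\}$ is an $\mathbb{F}_2$-basis of $\mathcal{R}$. In particular its kernel is trivial, so $h_{1,k}=h_{2,k}=h_{3,k}=0$ for every $k$, which gives $h_i(x)=0$ and hence $f_i(x)=g_i(x)$ for $i=1,2,3$.

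There is no real obstacle here; the content of the lemma is just the observation that $\mathcal{R}$ is a free $\mathbb{F}_2$-module on $\{1,u,u^2\}$, which is already visible from the enumeration of its eight elements in the preliminaries. The only thing to be careful about is to perform the comparison at the level of coefficients of $x^k$ rather than at the level of polynomials, which is what makes the $\mathbb{F}_2$-linear independence of $1,u,u^2$ applicable.
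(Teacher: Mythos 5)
Your proof is correct, but it runs along a different line from the paper's. The paper argues by successively multiplying the given identity by powers of $u$: multiplying by $u^2$ kills the $u$- and $u^2$-parts (since $u^3=0$) and leaves $u^2f_1(x)=u^2g_1(x)$, whence $f_1=g_1$; subtracting and multiplying the remainder by $u$ gives $f_2=g_2$, and then $f_3=g_3$ follows. You instead reduce to the homogeneous statement and prove outright that $\{1,u,u^2\}$ is an $\mathbb{F}_2$-basis of $\mathcal{R}$ (the eight listed elements being exactly the $2^3$ combinations, so the evaluation map $\mathbb{F}_2^3\to\mathcal{R}$ is a bijection), then compare coefficients of each $x^k$. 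Both arguments are sound and essentially trivial; note that the paper's cancellation step ``$u^2f_1(x)=u^2g_1(x)$ so $f_1(x)=g_1(x)$'' secretly relies on the very freeness fact you make explicit (namely that $u^2a=0$ with $a\in\mathbb{F}_2$ forces $a=0$), so your version spells out the structural reason behind the lemma, i.e.\ that $\mathcal{R}[x]=\mathbb{F}_2[x]\oplus u\mathbb{F}_2[x]\oplus u^2\mathbb{F}_2[x]$ as $\mathbb{F}_2[x]$-modules, while the paper's multiplication trick is shorter to write and extends mechanically to $\mathbb{F}_q[u]/\langle u^k\rangle$ without enumerating elements.
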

\begin{proof}
Let $f_1(x)+uf_2(x)+u^2f_3(x)=g_1(x)+ug_2(x)+u^2g_3(x)$. Multiplying $u^2$ on both sides we get 
$u^2f_1(x)=u^2g_1(x)$. So $f_1(x)=g_1(x)$. Hence $uf_2(x)+u^2f_3(x)=ug_2(x)+u^2g_3(x)$.
Thus $u^2f_2(x)=u^2g_2(x)$, which shows that $f_2(x)=g_2(x)$. Finally, easily we have that $f_3(x)=g_3(x)$.
\end{proof}

\begin{definition}
Let $\mathcal{C}$ be a cyclic code over $\mathcal{R}$. Define 
$$\mathcal{C}_{u^2}=\{u^2k(x)\mid  k(x)\in\mathbb{F}_2[x] \mbox{ and } u^2k(x)\in\mathcal{C}\}.$$
Notice that $\mathcal{C}_{u^2}$ is a subcode of $\mathcal{C}$.
\end{definition}

\begin{remark}
Regarding the map $\Phi$, we have that $\Phi(u^2\mathcal{R})=\{GC,CG\}$. 
\end{remark}

\begin{theorem} 
Let $\mathcal{C}=\langle g(x)+up_1(x)+u^2p_2(x), u^2a_2(x)\rangle$ be a cyclic code over $\mathcal{R}$ with
$a_2(x)\mid g(x)$ mod 2. Then $\mathcal{C}_{u^2}=\langle u^2a_2(x)\rangle$.
\end{theorem}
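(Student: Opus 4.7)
The plan is to prove the two inclusions separately. The containment $\langle u^2a_2(x)\rangle\subseteq\mathcal{C}_{u^2}$ is immediate: $u^2a_2(x)$ is among the generators of $\mathcal{C}$, and because $u^3=0$, multiplying $u^2a_2(x)$ by any $r(x)=r_0(x)+ur_1(x)+u^2r_2(x)\in\mathcal{R}[x]/\langle x^n-1\rangle$ collapses to $r_0(x)u^2a_2(x)$, which is of the form $u^2k(x)$ with $k(x)=r_0(x)a_2(x)\in\mathbb{F}_2[x]$ and so lies in $\mathcal{C}_{u^2}$.

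For the reverse inclusion, I would take an arbitrary $u^2k(x)\in\mathcal{C}_{u^2}$; by membership in $\mathcal{C}$ there exist $h_1,h_2\in\mathcal{R}[x]/\langle x^n-1\rangle$ with
$$u^2k(x)=h_1(x)\bigl(g(x)+up_1(x)+u^2p_2(x)\bigr)+h_2(x)u^2a_2(x).$$
Expanding each $h_i=h_{i,0}+uh_{i,1}+u^2h_{i,2}$ with $h_{i,j}\in\mathbb{F}_2[x]$ and collecting by powers of $u$, using $u^3=0$, then applying Lemma \ref{equality} componentwise in $\mathbb{F}_2[x]/\langle x^n-1\rangle$, yields three congruences modulo $x^n-1$:
\begin{align*}
h_{1,0}g&\equiv 0,\\
h_{1,0}p_1+h_{1,1}g&\equiv 0,\\
h_{1,0}p_2+h_{1,1}p_1+h_{1,2}g+h_{2,0}a_2&\equiv k.
\end{align*}
From the first, $g\mid x^n-1$ forces $(x^n-1)/g\mid h_{1,0}$, so I write $h_{1,0}=\bigl((x^n-1)/g\bigr)\alpha$ for some $\alpha\in\mathbb{F}_2[x]$. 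Substituting into the second congruence and using $g\mid p_1(x^n-1)/g$ from case (2) of the cited structure theorem, I can likewise solve for $h_{1,1}$ up to a multiple of $(x^n-1)/g$.

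Next I would plug both expressions into the third congruence and reduce modulo $a_2(x)$ to obtain $a_2(x)\mid k(x)$ in $\mathbb{F}_2[x]/\langle x^n-1\rangle$. The $h_{1,2}g$ and $h_{2,0}a_2$ terms vanish because $a_2\mid g$; the $h_{1,1}p_1$ contribution vanishes by $a_2\mid p_1(x^n-1)/g$; and the $h_{1,0}p_2$ contribution is controlled by $a_2\mid p_2\bigl((x^n-1)/g\bigr)^2$. Since $a_2\mid x^n-1$, divisibility modulo $x^n-1$ lifts to divisibility in $\mathbb{F}_2[x]$, giving $k(x)=a_2(x)\ell(x)$ for some $\ell\in\mathbb{F}_2[x]$, whence $u^2k(x)=\ell(x)\cdot u^2a_2(x)\in\langle u^2a_2(x)\rangle$.

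The delicate point is this final reduction modulo $a_2$. The three divisibility conditions from the cited theorem are in one-to-one correspondence with the three potentially obstructing summands, but the substituted form of $h_{1,1}$ introduces cross-terms of shape $\alpha\cdot\bigl(p_1(x^n-1)/g\bigr)\cdot p_1/g$, and verifying that these are killed modulo $a_2$ is the crux --- it is where all the case-(2) divisibility hypotheses on $(g,p_1,p_2,a_2)$ must be invoked together rather than individually.
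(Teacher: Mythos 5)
Your overall strategy is the same as the paper's: prove the easy inclusion $\langle u^2a_2(x)\rangle\subseteq\mathcal{C}_{u^2}$, then write $u^2k(x)$ in terms of the two generators, expand the coefficients as $h_{i,0}+uh_{i,1}+u^2h_{i,2}$, separate $u$-components via Lemma \ref{equality}, and conclude $a_2(x)\mid k(x)$. The difference is in how the component equations are read. The paper treats the membership equation as an identity of polynomials, so the $u^0$- and $u^1$-components give $\lambda_0(x)g(x)=0$ and $\lambda_0(x)p_1(x)+\lambda_1(x)g(x)=0$ in $\mathbb{F}_2[x]$, forcing $\lambda_0=\lambda_1=0$; then $k(x)=\lambda_2(x)g(x)+\mu(x)a_2(x)$ and the single hypothesis $a_2(x)\mid g(x)$ finishes the argument. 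You instead (quite reasonably) insist that membership in an ideal of $\mathcal{R}[x]/\langle x^n-1\rangle$ only yields congruences modulo $x^n-1$, and this is where your proof stops being a proof.

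Concretely, after writing $h_{1,0}=\frac{x^n-1}{g}\alpha$ and solving the second congruence for $h_{1,1}$ up to a multiple of $\frac{x^n-1}{g}$, the third congruence leaves you needing $a_2(x)$ to divide the contributions $\frac{x^n-1}{g}\alpha p_2$ and $\alpha p_1^2\frac{x^n-1}{g^2}$ (the cross-term coming from the substituted $h_{1,1}$). You yourself identify this as ``the crux'' and do not carry it out, and the divisibilities you propose to invoke do not suffice as stated: $a_2\mid p_2\left(\frac{x^n-1}{g}\right)^2$ does not yield $a_2\mid p_2\frac{x^n-1}{g}$, and none of the listed conditions disposes of $\alpha p_1^2\frac{x^n-1}{g^2}$. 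Moreover, the conditions you import ($g\mid x^n-1$, $g\mid p_1\frac{x^n-1}{g}$, $a_2\mid p_1\frac{x^n-1}{g}$, $a_2\mid p_2\left(\frac{x^n-1}{g}\right)^2$) come from case (2) of the cited structure theorem and are not hypotheses of the theorem being proved, which assumes only $a_2(x)\mid g(x)$ mod $2$. So the decisive step of your argument is missing: either you must actually verify the reduction modulo $a_2$ under clearly stated hypotheses, or you must argue, as the paper does, at the level of the polynomial identity itself, where only $a_2\mid g$ is needed. Your observation about the quotient by $x^n-1$ is a legitimate point of care that the paper's short proof passes over silently, but raising the difficulty is not the same as resolving it, and as written your proposal does not.
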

\begin{proof}
Clearly $\langle u^2a_2(x)\rangle\subseteq\mathcal{C}_{u^2}$. Let $u^2k(x)\in\mathcal{C}$ be such that $k(x)\in\mathbb{F}_2[x]$.
Then, there are two polynomials $\lambda(x),\mu(x)\in\mathcal{R}[x]$ such that
\begin{eqnarray*}
u^2k(x)&=&\lambda(x)\big(g(x)+up_1(x)+u^2p_2(x)\big)+u^2\mu(x)a_2(x)\in\mathcal{C}.
\end{eqnarray*} 
Let $\lambda(x)=\lambda_0(x)+u\lambda_1(x)+u^2\lambda_2(x)$, where $\lambda_i(x)$'s are polynomials in $\mathbb{F}_2[x]$. Note that, we may assume that $\mu(x)\in\mathbb{F}_2[x]$. Thus
\begin{eqnarray*}
u^2k(x)
&=&\lambda_0(x)g(x)+u\big(\lambda_0(x)p_1(x)+\lambda_1(x)g(x)\big)\\
&+&u^2(\lambda_0(x)p_2(x)+\lambda_1(x)p_1(x)+\lambda_2(x)g(x)+\mu(x)a_2(x)\big).
\end{eqnarray*}
Therefore, $\lambda_0(x)=\lambda_1(x)=0$ and $k(x)=\lambda_2(x)g(x)+\mu(x)a_2(x)$. Since $a_2(x)\mid g(x)$,
then $a_2(x)\mid k(x)$ and so $u^2k(x)\in\langle u^2a_2(x)\rangle$. Consequently
$\mathcal{C}_{u^2}=\langle u^2a_2(x)\rangle$.
\end{proof}

Now, we give a useful lemma. 
\begin{lemma}\label{equalityreciprocal}
Let $f_i(x)\in\mathbb{F}_2[x]$ for $i=1,2,3$. Suppose that $deg(f_1(x))=r$, $deg(f_2(x))=s$ and $deg(f_3(x))=t$ 
where $r>max\{s,t\}$. Then $$\big(f_1(x)+uf_2(x)+u^2f_3(x)\big)^*=f_1^*(x)+ux^{r-s}f_2^*(x)+u^2x^{r-t}f_3^*(x).$$
\end{lemma}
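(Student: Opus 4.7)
The strategy is simply to unpack the definition of the reciprocal polynomial and observe that the stated formula is forced by the degree hypothesis. First I would check that $F(x):=f_1(x)+uf_2(x)+u^2f_3(x)$ has degree exactly $r$ in $\mathcal{R}[x]$: since $r>\max\{s,t\}$, the coefficients of $x^r$ contributed by $uf_2(x)$ and $u^2f_3(x)$ vanish, so the leading coefficient of $F(x)$ equals the leading coefficient of $f_1(x)$, which is $1\in\mathbb{F}_2\subset\mathcal{R}$ (nonzero). Thus $\deg(F)=r$ is well defined and we may apply the definition $F^*(x)=x^r F(1/x)$.

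Next I would expand $F(1/x)=f_1(1/x)+uf_2(1/x)+u^2f_3(1/x)$ and multiply through by $x^r$, obtaining
\[
F^*(x)=x^r f_1(1/x)+u\,x^r f_2(1/x)+u^2\,x^r f_3(1/x).
\]
The first term is $f_1^*(x)$ by definition. For the second term I would write $x^r f_2(1/x)=x^{r-s}\cdot x^s f_2(1/x)=x^{r-s}f_2^*(x)$, which is legitimate because $r>s$ so the exponent $r-s$ is a nonnegative integer. The same manipulation applied to the third term gives $x^r f_3(1/x)=x^{r-t}f_3^*(x)$, again with $r-t\ge 0$. Assembling these three pieces yields the claimed identity.

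There is no genuine obstacle here; the only thing to be careful about is the degree bookkeeping, which is precisely why the hypothesis $r>\max\{s,t\}$ is needed (both to ensure that $\deg F=r$ so that $F^*$ is given by $x^rF(1/x)$, and to make the corrective powers $x^{r-s}$ and $x^{r-t}$ well defined as polynomial factors rather than rational functions).
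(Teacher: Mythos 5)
Your proof is correct, but it takes a genuinely different route from the paper's. The paper does not expand from the definition at all; it invokes \cite[Lemma 4.2]{YS} (the two-summand analogue of this identity over a similar ring), assumes without loss of generality that $s\geq t$, and applies that lemma twice: first to peel off $f_1^*(x)$ and write the remainder as $x^{r-s}\big(uf_2(x)+u^2f_3(x)\big)^*$, then to split that inner reciprocal as $uf_2^*(x)+x^{s-t}u^2f_3^*(x)$, after which combining exponents gives the factors $x^{r-s}$ and $x^{r-t}$. You instead work directly from the definition $F^*(x)=x^rF(1/x)$: you first check that the hypothesis $r>\max\{s,t\}$ forces $\deg F=r$ with leading coefficient equal to that of $f_1(x)$, namely $1$ (a unit in $\mathcal{R}$), and then distribute $x^r$ over the three summands, rewriting $x^rf_2(1/x)=x^{r-s}f_2^*(x)$ and $x^rf_3(1/x)=x^{r-t}f_3^*(x)$, both legitimate because $r-s$ and $r-t$ are nonnegative. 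What your approach buys is self-containedness and transparency: no external citation, no case split on whether $s\geq t$ or $t\geq s$, and a clear view of exactly where the degree hypothesis enters (it guarantees both that $F^*$ is computed with the exponent $r$ and that the corrective powers are genuine polynomial factors). What the paper's approach buys is brevity, since all the coefficient bookkeeping is delegated to the cited lemma. The only delicate point in your argument is the initial degree check, and you handle it correctly.
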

\begin{proof}
Without loss of the generality we may assume that $s\geq t$. By \cite[Lemma 4.2]{YS}, we deduce that 
\begin{eqnarray*}
\big(f_1(x)+uf_2(x)+u^2f_3(x)\big)^*&=&
f_1^*(x)+x^{r-s}(uf_2(x)+u^2f_3(x))^*\\
&=&f_1^*(x)+x^{r-s}(uf_2^*(x)+x^{s-t}u^2f_3^*(x))\\
&=&f_1^*(x)+ux^{r-s}f_2^*(x)+u^2x^{r-t}f_3^*(x).
\end{eqnarray*}
\end{proof}

\section{The reverse constraint}

\begin{theorem}\label{firstpresentation}
Let $\mathcal{C}=\langle g(x)+up_1(x)+u^2p_2(x)\rangle$ be a cyclic code of even length $n$ over $\mathcal{R}$ with $deg(g(x))=r$, $deg(p_1(x))=s$ and $deg(p_2(x))=t$ where $r>s$. Then $\mathcal{C}$ is reversible if and only if 
\begin{enumerate}
\item $g(x)$ is self-reciprocal;
\item \hspace{.4cm}$(a)$ $x^{r-s}p_1^*(x)=p_1(x)$ and $x^{r-t}p_2^*(x)=p_2(x)$, or

$(b)$ $x^{r-s}p_1^*(x)=g(x)+p_1(x)$ and $x^{r-t}p_2^*(x)=p_1(x)+p_2(x)$, or

$(c)$ $x^{r-s}p_1^*(x)=p_1(x)$ and $x^{r-t}p_2^*(x)=g(x)+p_2(x)$, or

$(d)$ $x^{r-s}p_1^*(x)=g(x)+p_1(x)$ and $x^{r-t}p_2^*(x)=g(x)+p_1(x)+p_2(x)$.
\end{enumerate}
\end{theorem}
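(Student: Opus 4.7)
The plan is to apply the Corollary characterizing reversibility as $c^*(x)\in\mathcal{C}$ for every codeword $c$, combined with the explicit form of $f^*$ supplied by Lemma~\ref{equalityreciprocal}. The cited structure theorem (case (1)) forces $\deg p_2<\deg p_1$, so together with the hypothesis $r>s$ one has $r>\max\{s,t\}$, and Lemma~\ref{equalityreciprocal} yields
\[
f^*(x) \;=\; g^*(x)+ux^{r-s}p_1^*(x)+u^2x^{r-t}p_2^*(x),
\]
where $f=g+up_1+u^2p_2$ is monic of $x$-degree $r$. In particular, every nonzero $c\in\mathcal{C}$ has a polynomial representative $c=h(x)f(x)$ with $\deg h<n-r$, and then $c^*=h^*f^*$; hence $c^*\in\mathcal{C}$ for every $c$ if and only if $f^*\in\mathcal{C}$. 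The first reduction is therefore: reversibility is equivalent to $f^*(x)\in\mathcal{C}$.

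For the forward direction I will write $f^*=h\,f$ in $\mathcal{R}[x]/\langle x^n-1\rangle$ with $h=h_0+uh_1+u^2h_2$, $h_i\in\mathbb{F}_2[x]$, expand, and equate $u^i$-components. This gives, modulo $x^n-1$,
\begin{align*}
h_0g &\equiv g^*,\\
h_0p_1+h_1g &\equiv x^{r-s}p_1^*,\\
h_0p_2+h_1p_1+h_2g &\equiv x^{r-t}p_2^*.
\end{align*}
The first congruence forces $g\mid g^*$ in $\mathbb{F}_2[x]$, and equality of degrees together with monicity yields $g^*=g$, which is (1). The divisibility $g\mid p_1(x^n-1)/g$ (with its analogue involving $p_2$) extracted from $f\mid(x^n-1)$ in $\mathcal{R}[x]$ then allows any shift of $h_0$ by a multiple of $(x^n-1)/g$ to be absorbed into $h_1$ and $h_2$, so we may normalize $h_0=1$. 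The second congruence becomes $h_1g\equiv x^{r-s}p_1^*+p_1\pmod{x^n-1}$; after reducing $h_1$ modulo $(x^n-1)/g$, the right-hand side is a polynomial multiple of $g$ of degree at most $r=\deg g$, hence equal to $0$ or $g$. This forces $h_1\in\{0,1\}$ and correspondingly $x^{r-s}p_1^*=p_1$ or $x^{r-s}p_1^*=g+p_1$. Applying the same analysis to the third congruence in each sub-case gives $h_2\in\{0,1\}$ and produces the four alternatives (a)--(d).

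For the backward direction, assuming (1) and one of (a)--(d), a direct computation in $\mathcal{R}[x]$ using $u^3=0$ will show
\[
f^*(x)=e(x)\,f(x), \qquad e(x)\in\{1,\;1+u,\;1+u^2,\;1+u+u^2\},
\]
with $e$ corresponding to (a), (b), (c), (d) respectively; each such $e$ is a unit of $\mathcal{R}$. Thus $f^*\in\langle f\rangle=\mathcal{C}$, and the first reduction delivers reversibility.

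The principal obstacle is the normalization $h_0=1$ in the forward direction: it requires unpacking the factorization $(g+up_1+u^2p_2)(q_0+uq_1+u^2q_2)=x^n-1$ to extract both $g\mid p_1(x^n-1)/g$ and the related divisibility controlling $p_2$, and verifying that perturbing $h_0$ by a multiple of $(x^n-1)/g$ is harmlessly reabsorbed into $h_1,h_2$ without disturbing the subsequent case analysis on the $u$- and $u^2$-components.
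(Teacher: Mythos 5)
Your proposal follows the paper's proof essentially step for step: reduce reversibility to checking $f^*\in\mathcal{C}$ for the generator $f=g+up_1+u^2p_2$, expand $f^*=k(x)f(x)$ into its $1,u,u^2$ components via Lemmas \ref{equality} and \ref{equalityreciprocal}, and use degree comparison to force $k_0=1$ and $k_1,k_2\in\{0,1\}$, which gives cases (a)--(d); your converse is the same direct computation, the units $1,\,1+u,\,1+u^2,\,1+u+u^2$ corresponding exactly to the combinations $f$, $f+uf$, $f+u^2f$, $f+uf+u^2f$ that the paper exhibits. The ``principal obstacle'' you flag (normalizing $h_0=1$ after working modulo $x^n-1$) is precisely the point the paper dispatches by componentwise equality plus the degree bound $\deg g^*\leq\deg g$, and your sketched absorption using $g\mid p_1\big((x^n-1)/g\big)$ together with reduction of $h_1,h_2$ modulo $(x^n-1)/g$ does go through, so there is no genuine gap.
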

\begin{proof}
$(\Rightarrow)$ Suppose that $\mathcal{C}$ is reversible. Consider $\mathcal{C}$ as an $\mathcal{R}[x]$-module.
Then, by Lemma \ref{equalityreciprocal}, it follows that
\begin{eqnarray*}
\big(g(x)+up_1(x)+u^2p_2(x)\big)^*&=&
g^*(x)+ux^{r-s}p_1^*(x)+u^2x^{r-t}p_2^*(x)\\
&=&k(x)\big(g(x)+up_1(x)+u^2p_2(x)\big)\in\mathcal{C},
\end{eqnarray*}
for some $k(x)\in\mathcal{R}[x]$. Assume that $k(x)=k_0(x)+uk_1(x)+u^2k_2(x)$ where $k_i(x)$'s 
are polynomials in $\mathbb{F}_2[x]$.
This implies 
\begin{eqnarray*}
g^*(x)+ux^{r-s}p_1^*(x)+u^2x^{r-t}p_2^*(x)
&=&k_0(x)g(x)+u\big(k_0(x)p_1(x)+k_1(x)g(x)\big)\\
&+&u^2(k_0(x)p_2(x)+k_1(x)p_1(x)+k_2(x)g(x)\big).
\end{eqnarray*}
Now, Lemma \ref{equality} implies that 
$g^*(x)=k_0(x)g(x)$, $x^{r-s}p_1^*(x)=k_0(x)p_1(x)+k_1(x)g(x)$ and $x^{r-t}p_2^*(x)=k_0(x)p_2(x)+k_1(x)p_1(x)+k_2(x)g(x)$.
Since $g^*(x)=k_0(x)g(x)$ and $deg(g^*(x)) \leq deg(g(x))$, we have that $k_0(x)=1$ and so $g(x)$ is self-reciprocal.
Therefore $x^{r-s}p_1^*(x)=p_1(x)+k_1(x)g(x)$; whence, comparing the degrees of the two sides of this equality, shows that
$k_1(x)=0$ or 1. On the other hand $x^{r-t}p_2^*(x)=p_2(x)+k_1(x)p_1(x)+k_2(x)g(x)$. 
Again, comparing the degrees of the two sides, implies that
$k_2(x)=0$ or 1, and so 
there are four cases:\\
{\bf Case 1.} If $k_1(x)=k_2(x)=0$, then $x^{r-s}p_1^*(x)=p_1(x)$ and $x^{r-t}p_2^*(x)=p_2(x)$.\\
{\bf Case 2.} If $k_1(x)=1,~k_2(x)=0$, then $x^{r-s}p_1^*(x)=p_1(x)+g(x)$ and $x^{r-t}p_2^*(x)=p_2(x)+p_1(x)$.\\
{\bf Case 3.} If $k_1(x)=0,~k_2(x)=1$, then $x^{r-s}p_1^*(x)=p_1(x)$ and $x^{r-t}p_2^*(x)=p_2(x)+g(x)$.\\
{\bf Case 4.} If $k_1(x)=k_2(x)=1$, then $x^{r-s}p_1^*(x)=p_1(x)+g(x)$ and $x^{r-t}p_2^*(x)=p_2(x)+p_1(x)+g(x)$.\\
$(\Leftarrow)$ Assume that (1) holds. If (2)(a) holds, then
\begin{eqnarray*}
\big(g(x)+up_1(x)+u^2p_2(x)\big)^*&=&
g^*(x)+ux^{r-s}p_1^*(x)+u^2x^{r-t}p_2^*(x)\\
&=&g(x)+up_1(x)+u^2p_2(x)\in\mathcal{C}.
\end{eqnarray*}
In the case (2)(b) holds, we have
\begin{eqnarray*}
\big(g(x)+up_1(x)+u^2p_2(x)\big)^*&=&
g^*(x)+ux^{r-s}p_1^*(x)+u^2x^{r-t}p_2^*(x)\\
&=&g(x)+u(p_1(x)+g(x))+u^2(p_1(x)+p_2(x))\\
&=&g(x)+up_1(x)+u^2p_2(x)+ug(x)+u^2p_1(x)\\
&=&g(x)+up_1(x)+u^2p_2(x)+u(g(x)+up_1(x)\\
&+&u^2p_2(x))\in\mathcal{C}.
\end{eqnarray*}
When (2)(c) holds, then
\begin{eqnarray*}
\big(g(x)+up_1(x)+u^2p_2(x)\big)^*&=&
g^*(x)+ux^{r-s}p_1^*(x)+u^2x^{r-t}p_2^*(x)\\
&=&g(x)+up_1(x)+u^2(p_2(x)+g(x))\\
&=&g(x)+up_1(x)+u^2p_2(x)+u^2g(x)\\
&=&g(x)+up_1(x)+u^2p_2(x)+u^2(g(x)\\
&+&up_1(x)+u^2p_2(x))\in\mathcal{C}.
\end{eqnarray*}
Finally, if we have (2)(d), then
\begin{eqnarray*}
\big(g(x)+up_1(x)+u^2p_2(x)\big)^*&=&
g^*(x)+ux^{r-s}p_1^*(x)+u^2x^{r-t}p_2^*(x)\\
&=&g(x)+u(p_1(x)+g(x))+u^2(p_1(x)+p_2(x)+g(x))\\
&=&g(x)+up_1(x)+u^2p_2(x)+ug(x)+u^2p_1(x)+u^2g(x)\\
&=&g(x)+up_1(x)+u^2p_2(x)+u(g(x)+up_1(x)\\
&+&u^2p_2(x))+u^2(g(x)+up_1(x)+u^2p_2(x))\in\mathcal{C}.
\end{eqnarray*}
Therefore $\mathcal{C}$ is reversible.
\end{proof}

\begin{theorem} \label{secondpresentation}
Let $\mathcal{C}=\langle g(x)+up_1(x)+u^2p_2(x),u^2a_2(x)\rangle$ be a cyclic code of even length $n$ over $\mathcal{R}$ with $deg(g(x))=r$, $deg(p_1(x))=s$ and $deg(p_2(x))=t$ where $r>max\{s,t\}$. Also, assume that $a_2(x)\mid g(x)\mid(x^n-1)$ mod 2. Then $\mathcal{C}$ is reversible if and only if 
\begin{enumerate}
\item $g(x)$ and $a_2(x)$ are self-reciprocal;
\item \hspace{.4cm}$(a)$ $x^{r-s}p_1^*(x)=p_1(x)$ and $a_2(x)\mid x^{r-t}p_2^*(x)+p_2(x)$, or

$(b)$ $x^{r-s}p_1^*(x)=g(x)+p_1(x)$ and $a_2(x)\mid x^{r-t}p_2^*(x)+p_1(x)+p_2(x)$.
\end{enumerate}
\end{theorem}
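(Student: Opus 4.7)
The plan is to follow the template of the proof of Theorem \ref{firstpresentation}, modified to handle the extra generator $u^2 a_2(x)$. For the forward direction, I first extract condition (1) by applying reversibility to the simpler generator: the reciprocal $(u^2 a_2(x))^* = u^2 a_2^*(x)$ must lie in $\mathcal{C}$, and since it has the form $u^2 k(x)$ with $k \in \mathbb{F}_2[x]$, the preceding theorem describing $\mathcal{C}_{u^2}$ forces $u^2 a_2^*(x) \in \langle u^2 a_2(x)\rangle$; hence $a_2(x) \mid a_2^*(x)$, and combined with $\deg a_2^* \le \deg a_2$ this forces $a_2$ to be self-reciprocal.

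Next, using Lemma \ref{equalityreciprocal} I expand $(g + up_1 + u^2 p_2)^* = g^* + u x^{r-s} p_1^* + u^2 x^{r-t} p_2^*$, and since this element lies in $\mathcal{C}$, write it as $\lambda(x)(g+up_1+u^2p_2) + \mu(x) u^2 a_2(x)$ with $\lambda = \lambda_0 + u\lambda_1 + u^2 \lambda_2$, $\lambda_i \in \mathbb{F}_2[x]$, and (WLOG) $\mu \in \mathbb{F}_2[x]$. Lemma \ref{equality} separates this equation into three identities over $\mathbb{F}_2[x]$. The $1$-part gives $g^* = \lambda_0 g$, whence $\lambda_0 = 1$ and $g$ is self-reciprocal. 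The $u$-part reads $x^{r-s} p_1^* = p_1 + \lambda_1 g$; since $\deg g = r > s \ge \deg p_1$, degree bounds force $\lambda_1 \in \{0,1\}$. The $u^2$-part reads $x^{r-t} p_2^* = p_2 + \lambda_1 p_1 + \lambda_2 g + \mu a_2$, and because $a_2 \mid g$, this collapses to the divisibility $a_2(x) \mid x^{r-t} p_2^*(x) + p_2(x) + \lambda_1 p_1(x)$, yielding (2)(a) when $\lambda_1 = 0$ and (2)(b) when $\lambda_1 = 1$.

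For the converse, condition (1) makes $(u^2 a_2)^* = u^2 a_2 \in \mathcal{C}$ immediate. For the other generator, in case (2)(a) I write $x^{r-t} p_2^* = p_2 + h\, a_2$ for some $h \in \mathbb{F}_2[x]$, so that $(g+up_1+u^2p_2)^* = (g+up_1+u^2p_2) + h \cdot (u^2 a_2) \in \mathcal{C}$. In case (2)(b), I set $x^{r-t}p_2^* = p_2 + p_1 + h\, a_2$ and use $u(g+up_1+u^2p_2) = ug + u^2 p_1$ (valid because $u^3 = 0$) to write the reciprocal as $(1+u)(g+up_1+u^2p_2) + h \cdot (u^2 a_2) \in \mathcal{C}$. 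Since the reciprocals of the two generators lie in $\mathcal{C}$, so do the reciprocals of all codewords (the same implicit step used in the converse of Theorem \ref{firstpresentation}), and $\mathcal{C}$ is reversible.

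The main point of care, relative to Theorem \ref{firstpresentation}, is that only two cases appear here instead of four: the presence of $u^2 a_2(x)$ among the generators lets the $u^2$-coefficient absorb any $\mathbb{F}_2[x]$-multiple of $a_2(x)$, so the dichotomy on $\lambda_2$ collapses into a single divisibility-modulo-$a_2$ condition. The bookkeeping in the $u^2$-component --- recognizing that $\lambda_2 g + \mu a_2$ ranges precisely over arbitrary $\mathbb{F}_2[x]$-multiples of $a_2$ because $a_2 \mid g$ --- is the principal point to verify carefully.
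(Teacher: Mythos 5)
Your proposal is correct and takes essentially the same route as the paper: expand $\big(g(x)+up_1(x)+u^2p_2(x)\big)^*$ via Lemma \ref{equalityreciprocal}, separate components with Lemma \ref{equality}, use degree bounds to force $\lambda_0(x)=1$ and $\lambda_1(x)\in\{0,1\}$, and absorb $\lambda_2(x)g(x)+\mu(x)a_2(x)$ into a multiple of $a_2(x)$ since $a_2(x)\mid g(x)$, which is exactly why only two cases survive. The only cosmetic differences are that you deduce $a_2^*(x)=a_2(x)$ by invoking the earlier theorem $\mathcal{C}_{u^2}=\langle u^2a_2(x)\rangle$ where the paper repeats that membership computation inline, and you verify both converse cases (and make explicit the generators-suffice step) while the paper writes out only case (b).
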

\begin{proof}
$(\Rightarrow)$
Assume that $\mathcal{C}$ is reversible.  Notice that there are two polynomials $t(x),s(x)\in\mathcal{R}[x]$ such that
\begin{eqnarray*}
\big(g(x)+up_1(x)+u^2p_2(x)\big)^*&=&
g^*(x)+ux^{r-s}p_1^*(x)+u^2x^{r-t}p_2^*(x)\\
&=&g(x)+ux^{r-s}p_1^*(x)+u^2x^{r-t}p_2^*(x)\\
&=&t(x)\big(g(x)+up_1(x)+u^2p_2(x)\big)+u^2s(x)a_2(x)\in\mathcal{C}.
\end{eqnarray*} 
Let $t(x)=t_0(x)+ut_1(x)+u^2t_2(x)$, where $t_i(x)$'s are polynomials in $\mathbb{F}_2[x]$. Clearly, we may assume that $s(x)\in\mathbb{F}_2[x]$. Thus
\begin{eqnarray*}
g^*(x)+ux^{r-s}p_1^*(x)+u^2x^{r-t}p_2^*(x)
&=&t_0(x)g(x)+u\big(t_0(x)p_1(x)+t_1(x)g(x)\big)\\
&+&u^2(t_0(x)p_2(x)+t_1(x)p_1(x)+t_2(x)g(x)\\
&+&s(x)a_2(x)\big).
\end{eqnarray*}
Similar to the proof of Theorem \ref{firstpresentation}, we deduce that $g(x)$ is self-reciprocal, $t_0(x)=1$
and $t_1(x)=0$ or $1$. If $t_1(x)=0$, then $x^{r-s}p_1^*(x)=p_1(x)$ and 
$x^{r-t}p_2^*(x)=p_2(x)+t_2(x)g(x)+s(x)a_2(x)$.
If $t_1(x)=1$, then $x^{r-s}p_1^*(x)=p_1(x)+g(x)$ and $x^{r-t}p_2^*(x)=p_2(x)+p_1(x)+t_2(x)g(x)+s(x)a_2(x)$.\\
By assumption we have $a_2(x)\mid g(x)$ mod 2. So

 $x^{r-s}p_1^*(x)=p_1(x)$ and $a_2(x)\mid x^{r-t}p_2^*(x)+p_2(x)$, or

 $x^{r-s}p_1^*(x)=g(x)+p_1(x)$ and $a_2(x)\mid x^{r-t}p_2^*(x)+p_1(x)+p_2(x)$.\\
Moreover $u^2a_2^*(x)\in\mathcal{C}$. Then, there exist polynomials $\lambda_0(x),\lambda_1(x),\lambda_2(x)\in\mathcal{R}[x]$
and $\mu(x)\in\mathbb{F}_2[x]$
such that 
\begin{eqnarray*}
u^2a_2^*(x)
&=&\lambda_0(x)g(x)+u\big(\lambda_0(x)p_1(x)+\lambda_1(x)g(x)\big)\\
&+&u^2(\lambda_0(x)p_2(x)+\lambda_1(x)p_1(x)+\lambda_2(x)g(x)+\mu(x)a_2(x)\big).
\end{eqnarray*}
Therefore, $\lambda_0(x)=\lambda_1(x)=0$ and $a_2^*(x)=\lambda_2(x)g(x)+\mu(x)a_2(x)$. Since $a_2(x)\mid g(x)$,
then $a_2(x)\mid a_2^*(x)$ and so $a_2^*(x)=a_2(x)$, i.e., $a_2(x)$ is self-reciprocal.\\
$(\Leftarrow)$ We investigate only one case. Suppose that (1) and part (b) of (2) hold. Then
\begin{eqnarray*}
\big(g(x)+up_1(x)+u^2p_2(x)\big)^*&=&
g^*(x)+ux^{r-s}p_1^*(x)+u^2x^{r-t}p_2^*(x)\\
&=&g(x)+u(p_1(x)+g(x))+u^2(p_1(x)+p_2(x)\\
&+&\lambda(x)a_2(x))~~~ \mbox{for some}~~ \lambda(x)\in\mathbb{F}_2[x]\\
&=&g(x)+up_1(x)+u^2p_2(x)+ug(x)\\
&+&u^2p_1(x)+\lambda(x)u^2a_2(x)\\
&=&g(x)+up_1(x)+u^2p_2(x)+u(g(x)+up_1(x)+u^2p_2(x))\\
&+&\lambda(x)u^2a_2(x)\in\mathcal{C}.
\end{eqnarray*}
Consequently $\mathcal{C}$ is reversible.
\end{proof}


\section{The reverse-complement constraint}
In this section, the reverse-complement constraint is examined for cyclic codes.

\begin{lemma}\label{tablenotes}
The following conditions hold:
\begin{enumerate}
\item For any elements $x,y,z\in\mathbb{F}_2$, $\overline{x+yu+zu^2}=\overline{x}+yu+zu^2$.
\item For any $a\in\mathcal{R}$, $a+\overline{a}=u^2$.
\item For any $a,b\in\mathcal{R}$, $\overline{a+b}=\overline{a}+\overline{b}+u^2$. 
\end{enumerate}
\end{lemma}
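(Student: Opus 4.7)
The plan is to reduce everything to part (2), which in turn follows by a direct inspection of Table 1. So I would prove the three statements in the order (2), (1), (3), since (2) is the fundamental observation and the other two are algebraic consequences of it.

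For part (2), I would observe that Table 1 is organized into four complementary pairs of codons, and in each pair the two ring elements differ by exactly $u^2$. Concretely, $GC \leftrightarrow 0$ pairs with $CG \leftrightarrow u^2$, $AT \leftrightarrow 1$ pairs with $TA \leftrightarrow 1+u^2$, $GT \leftrightarrow u$ pairs with $CA \leftrightarrow u+u^2$, and $TG \leftrightarrow 1+u$ pairs with $AC \leftrightarrow 1+u+u^2$. In each of the four cases, the sum of the element and its complement is $u^2$. Since these exhaust all eight elements of $\mathcal{R}$, part (2) follows. Equivalently, one can record this as $\overline{a} = a + u^2$ for every $a \in \mathcal{R}$, which is the form I would actually use in the remaining two parts.

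For part (1), the identity $\overline{a} = a + u^2$ from (2) applied to $a = x+yu+zu^2$ gives
\[
\overline{x+yu+zu^2} = x + yu + zu^2 + u^2 = (x+u^2) + yu + zu^2,
\]
and applying (2) again to the element $x \in \mathbb{F}_2 \subseteq \mathcal{R}$ yields $x + u^2 = \overline{x}$, finishing the proof.

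For part (3), I would again use $\overline{c} = c + u^2$. On the one hand $\overline{a+b} = a + b + u^2$; on the other hand
\[
\overline{a} + \overline{b} + u^2 = (a+u^2) + (b+u^2) + u^2 = a + b + u^2
\]
in characteristic $2$, so the two sides agree. There is no real obstacle here: the only nontrivial step is the empirical observation from Table 1 used to establish (2), and it is precisely this observation that collapses parts (1) and (3) to one-line identities.
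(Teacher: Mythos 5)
Your proof is correct and follows essentially the same route as the paper: the paper also verifies (1) and (2) by direct inspection of Table 1 and then derives (3) from (2) via the identity $\overline{a}=a+u^2$, exactly as you do. The only cosmetic difference is that you deduce (1) from (2) rather than checking it directly from the table, which is an equally valid (and trivial) reduction.
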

\begin{proof}
(1), (2) Regarding Table 1, the computations are easy.\\
(3) Let $a,b\in\mathcal{R}$. By part (2), 
$$\overline{a+b}=a+b+u^2=(\overline{a}+u^2)+(\overline{b}+u^2)+u^2=\overline{a}+\overline{b}+u^2.$$
\end{proof}

From now on, we denote $\mathbb{I}(x)=\frac{1+x^n}{1+x}$. 

\begin{theorem}\label{mainequality}
Let $f(x)\in\mathcal{R}[x]$. Then $f(x)^{rc}+u^2\mathbb{I}(x)=f(x)^r$.
\end{theorem}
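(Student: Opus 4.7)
The plan is to unwind the definitions of the reverse and reverse-complement, reduce everything to a coefficient-by-coefficient statement, and then apply Lemma \ref{tablenotes}(2) to convert complementation into addition of $u^2$.

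Concretely, I would first fix notation: view $f(x)$ as a polynomial of degree at most $n-1$ (so that the reverse operation, which is defined on vectors of length $n$, makes sense), and write $f(x)=\sum_{i=0}^{n-1} f_i x^i$ with each $f_i\in\mathcal{R}$. Then by the definitions in Section 2,
\[
f(x)^r=\sum_{i=0}^{n-1} f_{n-1-i}\,x^i,\qquad f(x)^{rc}=\sum_{i=0}^{n-1} \overline{f_{n-1-i}}\,x^i.
\]

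Next I would invoke Lemma \ref{tablenotes}(2), which gives $\overline{a}=a+u^2$ for every $a\in\mathcal{R}$. Substituting $\overline{f_{n-1-i}}=f_{n-1-i}+u^2$ into the expression for $f(x)^{rc}$ yields
\[
f(x)^{rc}=\sum_{i=0}^{n-1}\bigl(f_{n-1-i}+u^2\bigr)x^i = f(x)^r + u^2\sum_{i=0}^{n-1} x^i = f(x)^r + u^2\,\mathbb{I}(x),
\]
using the identity $\mathbb{I}(x)=\frac{1+x^n}{1+x}=1+x+\cdots+x^{n-1}$. Adding $u^2\mathbb{I}(x)$ to both sides (and remembering we are in characteristic $2$, so addition is its own inverse) produces the desired equality $f(x)^{rc}+u^2\mathbb{I}(x)=f(x)^r$.

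There is no real obstacle here; the only subtle point worth flagging is the convention that $f(x)$ must be represented by its canonical lift of degree $<n$ so that the reverse operation (which is length-sensitive) is well defined. Once that is settled, the argument is a direct computation combining the coefficient-wise definition of $rc$ with the additive description of the bar operation on $\mathcal{R}$.
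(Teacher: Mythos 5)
Your proof is correct and follows essentially the same route as the paper: pad $f$ to a length-$n$ coefficient vector, use Lemma \ref{tablenotes}(2) (complementation is addition of $u^2$) coefficient-wise, and observe that the added $u^2$'s sum to $u^2\mathbb{I}(x)$. The only cosmetic difference is that the paper writes $f$ with leading coefficient $1$ of degree $r$ and spells out the zero-padding terms explicitly, whereas you treat all $n$ coefficients uniformly, which is if anything slightly cleaner.
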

\begin{proof}
Assume that $f(x)=f_0+f_1x+\dots+f_{r-1}x^{r-1}+x^r$. Then
\begin{eqnarray*}
f(x)^{rc}&=&\overline{0}+\overline{0}x+\dots+\overline{0}x^{n-r-2}+\overline{1}x^{n-r-1}\\
&+&\overline{f_{r-1}}x^{n-r}+\dots+\overline{f_1}x^{n-2}+\overline{f_0}x^{n-1}\\
&=&u^2+u^2x+\dots+u^2x^{n-r-2}+(1+u^2)x^{n-r-1}\\
&+&\overline{f_{r-1}}x^{n-r}+\dots+\overline{f_1}x^{n-2}+\overline{f_0}x^{n-1}\\
&=&u^2+u^2x+\dots+u^2x^{n-r-2}+(1+u^2)x^{n-r-1}\\
&+&(f_{r-1}+u^2)x^{n-r}+\dots+(f_1+u^2)x^{n-2}+(f_0+u^2)x^{n-1}.
\end{eqnarray*}
Hence
\begin{eqnarray*}
f(x)^{rc}+u^2\mathbb{I}(x)&=&x^{n-r-1}
+f_{r-1}x^{n-r}+\dots+f_1x^{n-2}+f_0x^{n-1}\\
&=&f(x)^r.
\end{eqnarray*}
\end{proof}

\begin{theorem}\label{rev-com1}
Let $\mathcal{C}$ be a cyclic code of lenght $n$ over $\mathcal{R}$. Then $\mathcal{C}$ is reversible-complement if and only if $\mathcal{C}$ is reversible and $u^2\mathbb{I}(x)\in\mathcal{C}$.
\end{theorem}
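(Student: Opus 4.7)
The plan is to use Theorem \ref{mainequality}, which says $f(x)^{rc} + u^2\mathbb{I}(x) = f(x)^r$ (equivalently $f(x)^{rc} = f(x)^r + u^2\mathbb{I}(x)$ in characteristic $2$), as a dictionary between the reverse and reverse-complement operations. Once we have this identity in hand, the equivalence is essentially bookkeeping with one crucial input: the zero codeword.

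For the forward direction, I would assume $\mathcal{C}$ is reversible-complement. Since $\mathcal{C}$ is linear, $0 \in \mathcal{C}$, so $0^{rc} \in \mathcal{C}$. Applying Theorem \ref{mainequality} to $f(x) = 0$ gives $0^{rc} = 0^r + u^2\mathbb{I}(x) = u^2\mathbb{I}(x)$, so $u^2\mathbb{I}(x) \in \mathcal{C}$. Now for any $f(x) \in \mathcal{C}$, the hypothesis gives $f(x)^{rc} \in \mathcal{C}$, and then Theorem \ref{mainequality} yields
\begin{equation*}
f(x)^r = f(x)^{rc} + u^2\mathbb{I}(x) \in \mathcal{C},
\end{equation*}
so $\mathcal{C}$ is reversible.

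For the backward direction, assume $\mathcal{C}$ is reversible and $u^2\mathbb{I}(x) \in \mathcal{C}$. For any $f(x) \in \mathcal{C}$, reversibility gives $f(x)^r \in \mathcal{C}$, and then Theorem \ref{mainequality} gives
\begin{equation*}
f(x)^{rc} = f(x)^r + u^2\mathbb{I}(x) \in \mathcal{C},
\end{equation*}
since both summands lie in the linear code $\mathcal{C}$. Thus $\mathcal{C}$ is reversible-complement.

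There is no real obstacle here beyond making sure the interpretation of the ``reverse-complement of $0$'' agrees with Theorem \ref{mainequality}; the subtle point is that the bar operation on $0 \in \mathcal{R}$ corresponds (via Table 1 and Lemma \ref{tablenotes}(2)) to $u^2$, so the coordinatewise complement of the all-zero vector is $u^2(1+x+\cdots+x^{n-1}) = u^2\mathbb{I}(x)$, which matches the $u^2\mathbb{I}(x)$ appearing in Theorem \ref{mainequality}. This consistency is exactly what makes the short proof work.
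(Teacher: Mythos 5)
Your proof is correct and follows essentially the same route as the paper: extract $u^2\mathbb{I}(x)$ from the (reverse-)complement of the zero codeword, then use the identity $f(x)^{rc}+u^2\mathbb{I}(x)=f(x)^r$ together with linearity to pass between the two constraints. You have merely written out explicitly the bookkeeping that the paper compresses into ``Now, apply Theorem \ref{mainequality}.''
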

\begin{proof}
Let $\mathcal{C}$ be reversible-complement. Since $0+0x+\dots+0x^{n-1}\in\mathcal{C}$, we have that
$$\overline{0+0x+\dots+0x^{n-1}}=u^2(1+x+\dots+x^{n-1})=u^2\mathbb{I}(x)\in\mathcal{C}.$$ Now, apply Theorem \ref{mainequality}.
\end{proof}

\begin{theorem}\label{rev-com1}
Let $\mathcal{C}_1,\mathcal{C}_2$ be two reversible-complement cyclic codes of lenght $n$ over $\mathcal{R}$. Then $\mathcal{C}_1+\mathcal{C}_2$ and $\mathcal{C}_1\cap\mathcal{C}_2$ are reversible-complement cyclic codes.
\end{theorem}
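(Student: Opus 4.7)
The plan is to reduce the problem to the characterization provided by the previous Theorem \ref{rev-com1}: a cyclic code $\mathcal{C}$ is reversible-complement if and only if $\mathcal{C}$ is reversible and $u^2\mathbb{I}(x)\in\mathcal{C}$. So for each of the two codes $\mathcal{C}_1+\mathcal{C}_2$ and $\mathcal{C}_1\cap\mathcal{C}_2$, I need to verify three things: it is cyclic, it is reversible, and it contains $u^2\mathbb{I}(x)$.

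First I would dispose of the cyclic part. Viewing $\mathcal{C}_1,\mathcal{C}_2$ as ideals of $\mathcal{R}[x]/\langle x^n-1\rangle$, their sum and intersection are again ideals, hence cyclic codes of length $n$ over $\mathcal{R}$. For the condition $u^2\mathbb{I}(x)\in \mathcal{C}_1+\mathcal{C}_2$ and $u^2\mathbb{I}(x)\in \mathcal{C}_1\cap \mathcal{C}_2$, this is immediate: by Theorem \ref{rev-com1}, $u^2\mathbb{I}(x)$ lies in both $\mathcal{C}_1$ and $\mathcal{C}_2$, so it certainly lies in their intersection and (a fortiori) in their sum.

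The remaining (and only nontrivial) step is reversibility, and I would carry it out at the level of the vector reversal $X\mapsto X^r$, which is simply a linear permutation of coordinates, rather than at the level of the polynomial reciprocal $f^*$ (which involves degree-dependent shifts and is not additive in a straightforward way). For the intersection: if $X\in\mathcal{C}_1\cap\mathcal{C}_2$, then by reversibility of each $\mathcal{C}_i$ we get $X^r\in\mathcal{C}_1$ and $X^r\in\mathcal{C}_2$, so $X^r\in\mathcal{C}_1\cap\mathcal{C}_2$. For the sum: any $X\in\mathcal{C}_1+\mathcal{C}_2$ can be written $X=X_1+X_2$ with $X_i\in\mathcal{C}_i$; since reversal is linear, $X^r=X_1^r+X_2^r$, and each $X_i^r\in\mathcal{C}_i$ by reversibility, so $X^r\in\mathcal{C}_1+\mathcal{C}_2$.

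Combining these observations with Theorem \ref{rev-com1} yields that both $\mathcal{C}_1+\mathcal{C}_2$ and $\mathcal{C}_1\cap\mathcal{C}_2$ are reversible-complement cyclic codes. I do not anticipate any genuine obstacle here; the only pitfall to avoid is working through the polynomial reciprocal $f^*$ (where additivity fails due to degree shifts in Lemma \ref{equalityreciprocal}) instead of the vector reversal, which is manifestly $\mathcal{R}$-linear.
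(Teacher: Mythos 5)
Your proof is correct, and it takes a somewhat different route from the paper's. The paper proves closure of $\mathcal{C}_1+\mathcal{C}_2$ under the reverse-complement map directly at the coordinate level: writing a general element as $X+Y$ with $X\in\mathcal{C}_1$, $Y\in\mathcal{C}_2$, it applies Lemma \ref{tablenotes}(3) ($\overline{a+b}=\overline{a}+\overline{b}+u^2$) componentwise to obtain the identity $(X+Y)^{rc}=\bigl(X^{rc}+u^2\mathbb{I}(x)\bigr)+Y^{rc}$, which lies in $\mathcal{C}_1+\mathcal{C}_2$ since $X^{rc},\,u^2\mathbb{I}(x)\in\mathcal{C}_1$ and $Y^{rc}\in\mathcal{C}_2$; the intersection case is dismissed as evident. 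You instead route everything through the characterization established just before (a cyclic code is reversible-complement if and only if it is reversible and contains $u^2\mathbb{I}(x)$, which in turn rests on Theorem \ref{mainequality}), reducing the problem to three easy checks: sums and intersections of ideals are ideals, $u^2\mathbb{I}(x)$ lies in both codes hence in their sum and intersection, and the vector reversal $X\mapsto X^r$ is an $\mathcal{R}$-linear coordinate permutation, so reversibility passes to sums and intersections. Both arguments hinge on the same fact, namely that $X^{rc}$ and $X^r$ differ by the fixed word $u^2\mathbb{I}(x)$; yours packages that fact once via the earlier theorem (using both directions of the equivalence, which is what that theorem asserts), making the proof more modular and uniform over the two codes, while the paper's computation is self-contained at the vector level and exhibits the explicit formula for $(X+Y)^{rc}$. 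Your remark about avoiding the polynomial reciprocal $f^*$ (whose degree-dependent shifts break additivity) is a sensible precaution, and your handling of the sum, where the naive ``apply $rc$ to each summand'' would fail because $rc$ is not additive, is exactly where the extra term $u^2\mathbb{I}(x)$ must be absorbed; both proofs do this correctly.
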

\begin{proof}
Let $X=x_0x_1\cdots x_{n-1}$ and $Y=y_0y_1\cdots y_{n-1}$ be two arbitrary codewords in 
$\mathcal{C}_1$ and $\mathcal{C}_2$, respectively. Then 
\begin{eqnarray*}
(X+Y)^{rc}&=&\overline{(x_{n-1}+y_{n-1})}\cdots\overline{(x_1+y_1)}~\overline{(x_0+y_0)}\\
&=&(\overline{x_{n-1}}+\overline{y_{n-1}}+u^2)\cdots(\overline{x_1}+\overline{y_{1}}
+u^2)~(\overline{x_0}+\overline{y_0}+u^2)~~~ \mbox{By Lemma \ref{tablenotes}(3)}\\
&=&\big[(\overline{x_{n-1}}+u^2)\cdots(\overline{x_1}+u^2)~(\overline{x_0}+u^2)\big]
+\overline{y_{n-1}}\cdots\overline{y_1}~\overline{y_0}\\
&=&(X^{rc}+u^2\mathbb{I}(x))+Y^{rc}\in\mathcal{C}_1+\mathcal{C}_2
\end{eqnarray*}
which shows that $\mathcal{C}_1+\mathcal{C}_2$ is reversible-complement. It is evident that
$\mathcal{C}_1\cap\mathcal{C}_2$ is reversible-complement.
\end{proof}

\begin{corollary}
Let $\mathcal{C}=\langle g(x)+up_1(x)+u^2p_2(x)\rangle$ be a cyclic code of even length $n$ over $\mathcal{R}$ with $deg(g(x))=r$, $deg(p_1(x))=s$ and $deg(p_2(x))=t$ where $r>s$. Then $\mathcal{C}$ is reversible-complement if and only if 
\begin{enumerate}
\item $u^2\mathbb{I}(x)\in\mathcal{C}$.
\item $g(x)$ is self-reciprocal;
\item \hspace{.4cm}$(a)$ $x^{r-s}p_1^*(x)=p_1(x)$ and $x^{r-t}p_2^*(x)=p_2(x)$, or

$(b)$ $x^{r-s}p_1^*(x)=g(x)+p_1(x)$ and $x^{r-t}p_2^*(x)=p_1(x)+p_2(x)$, or

$(c)$ $x^{r-s}p_1^*(x)=p_1(x)$ and $x^{r-t}p_2^*(x)=g(x)+p_2(x)$, or

$(d)$ $x^{r-s}p_1^*(x)=g(x)+p_1(x)$ and $x^{r-t}p_2^*(x)=g(x)+p_1(x)+p_2(x)$.
\end{enumerate}
\end{corollary}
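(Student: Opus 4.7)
The plan is to derive this corollary as an immediate combination of Theorem \ref{rev-com1} and Theorem \ref{firstpresentation}, both of which apply to the single-generator cyclic code $\mathcal{C}=\langle g(x)+up_1(x)+u^2p_2(x)\rangle$ under consideration.

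First, I will invoke Theorem \ref{rev-com1}, which states that $\mathcal{C}$ is reversible-complement precisely when $\mathcal{C}$ is reversible and $u^2\mathbb{I}(x)\in\mathcal{C}$. The latter membership condition is exactly item (1) in the statement, so this step reduces the whole question to characterizing when $\mathcal{C}$ is reversible.

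Next, I will apply Theorem \ref{firstpresentation} verbatim. Under the degree hypotheses $\deg g=r$, $\deg p_1=s$, $\deg p_2=t$ with $r>s$ (and $t<s$, which is built into the standard presentation from Theorem 2.3(1) and thus automatically yields $r>\max\{s,t\}$ as required by Theorem \ref{firstpresentation}), that theorem characterizes reversibility of $\mathcal{C}$ as the conjunction of $g(x)$ being self-reciprocal together with one of the four reciprocal identities (a)--(d) for $p_1$ and $p_2$. These are precisely items (2) and (3) of the corollary. Assembling the three items completes both directions of the equivalence.

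I do not anticipate any substantive obstacle: the corollary is, in effect, a bookkeeping assembly of the two previously established equivalences. The only minor verification is that the degree hypothesis $r>s$ of the corollary, together with the standard-form conventions used in Theorem 2.3(1), suffices to trigger the hypothesis $r>\max\{s,t\}$ of Theorem \ref{firstpresentation}; this is immediate. Accordingly, the entire proof will consist of citing Theorem \ref{rev-com1} to isolate condition (1) and Theorem \ref{firstpresentation} to supply conditions (2) and (3).
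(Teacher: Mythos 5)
Your proposal is correct and matches the paper's intent exactly: the corollary is stated there without a separate proof precisely because it is the immediate conjunction of Theorem \ref{rev-com1} (reversible-complement iff reversible and $u^2\mathbb{I}(x)\in\mathcal{C}$) with the reversibility criterion of Theorem \ref{firstpresentation}. Your side remark that $\deg(p_2)<\deg(p_1)$ from the standard presentation guarantees the degree hypotheses needed for Theorem \ref{firstpresentation} is a harmless extra check and does not change the argument.
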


\begin{example}
Let $x^8-1=(x+1)^8=g^8$ over $\mathbb{F}_2$. Let $\mathcal{C}=\langle g(x)+up_1(x)+u^2p_2(x)\rangle$,
where $g(x)=g^6$, $p_1(x)=x^5+x$ and $p_2(x)=x^4+x^2$. It is easy to check that 
$g(x)=x^6+x^4+x^2+1$ is self-reciprocal. Also,
$x^ip_1^{*}(x)=p_1(x)$ and $x^jp_2^{*}(x)=p_2(x)$ where $i = deg(g(x))-deg(p_1(x))$ and $j= deg(g(x))-deg(p_2(x))$. Then $\mathcal{C}$ is a cyclic DNA code. 
Here, we present an algorithm for generating the 28 codewords of this code which are given in Table 2. 
In the first step, find the presentation of elements $\alpha(g(x)+up_1(x)+u^2p_2(x))$ where $\alpha\in\mathcal{R}$,
and in the second step, find the presentation of all cyclic shifts of these elements.
Note that the element $u^2\mathbb{I}(x)=u^2(x+1)g(x)=u^2(x+1)(g(x)+up_1(x)+u^2p_2(x))$ has the presentation $CGCGCGCGCGCGCGCG$. It is observed that $\mathcal{C}$ is a cyclic code with minimum Hamming distance 4
\newpage
\begin{center}
	{\bf Table 2.} DNA codes of length 8 obtained from $\mathcal{C}=\langle g(x)+up_1(x)+u^2p_2(x)\rangle$.\\~\\
	\begin{tabular}{ l  c  c  c  c  c  c  c  c }
		\hline
		
		$GCGCGCGCGCGCGCGC$& & & & & & & & $GTCGGTGCGTCGGTGC$\\
		$CGCGCGCGCGCGCGCG$& & & & & & & & $GCGTCGGTGCGTCGGT$\\
		$ATGTTAGCTAGTATGC$& & & & & & & & $GTGCGTCGGTGCGTCG$\\
		$GCATGTTAGCTAGTAT$& & & & & & & & $CGGTGCGTCGGTGCGT$\\
	    $ATGCATGTTAGCTAGT$& & & & & & & & $TGCAACGCACCATGGC$\\
	    $GTATGCATGTTAGCTA$& & & & & & & & $GCTGCAACGCACCATG$\\
	    $TAGTATGCATGTTAGC$& & & & & & & & $TGGCTGCAACGCACCA$\\
	    $GCTAGTATGCATGTTA$& & & & & & & & $CATGGCTGCAACGCAC$\\
	    $TAGCTAGTATGCATGT$& & & & & & & & $ACCATGGCTGCAACGC$\\
	    $GTTAGCTAGTATGCAT$& & & & & & & & $CACGCAGCCACGCAGC$\\
	    $CAACGCACCATGGCTG$& & & & & & & & $GCCACGCAGCCACGCA$\\
	    $CGCAGCCACGCAGCCA$& & & & & & & & $CAGCCACGCAGCCACG$\\
	    $GCCGGCCGGCCGGCCG$& & & & & & & & $CGGCCGGCCGGCCGGC$\\
	    $GCACCATGGCTGCAAC$& & & & & & & & $ACGCACCATGGCTGCA$\\
		\hline
	\end{tabular}
\end{center}
\end{example}

\begin{corollary} 
Let $\mathcal{C}=\langle g(x)+up_1(x)+u^2p_2(x),u^2a_2(x)\rangle$ be a cyclic code of even length $n$ over $\mathcal{R}$ with $deg(g(x))=r$, $deg(p_1(x))=s$ and $deg(p_2(x))=t$ where $r>max\{s,t\}$. Also, assume that $a_2(x)\mid a_1(x)\mid g(x)\mid(x^n-1)$ mod 2. Then $\mathcal{C}$ is reversible-complement if and only if 
\begin{enumerate}
\item $u^2\mathbb{I}(x)\in\mathcal{C}$.
\item $g(x)$ and $a_2(x)$ are self-reciprocal;
\item \hspace{.4cm}$(a)$ $x^{r-s}p_1^*(x)=p_1(x)$ and $a_2(x)\mid x^{r-t}p_2^*(x)+p_2(x)$, or

$(b)$ $x^{r-s}p_1^*(x)=g(x)+p_1(x)$ and $a_2(x)\mid x^{r-t}p_2^*(x)+p_1(x)+p_2(x)$.
\end{enumerate}
\end{corollary}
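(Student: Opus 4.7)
The plan is to obtain this corollary as an immediate consequence of combining Theorem \ref{rev-com1} with Theorem \ref{secondpresentation}. Since both building blocks have already been proved, essentially no new work is required beyond verifying that the hypotheses match.

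First I would invoke Theorem \ref{rev-com1}, which asserts that a cyclic code of length $n$ over $\mathcal{R}$ is reversible-complement if and only if it is reversible \emph{and} contains $u^2\mathbb{I}(x)$. This second condition is exactly item (1) of the corollary, so the reversible-complement property splits into item (1) together with the reversibility of $\mathcal{C}$.

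Next I would apply Theorem \ref{secondpresentation} to the cyclic code $\mathcal{C}=\langle g(x)+up_1(x)+u^2p_2(x),\,u^2a_2(x)\rangle$. That theorem characterizes reversibility of such a code by precisely the self-reciprocity of $g(x)$ and $a_2(x)$ (item (2)) and the two-case alternative for the relation between $p_1,p_2$ and their reciprocals modulo $a_2(x)$ and $g(x)$ (item (3)). Here the degree assumptions $r>\max\{s,t\}$ transfer directly, and the divisibility hypothesis of Theorem \ref{secondpresentation}, namely $a_2(x)\mid g(x)\mid (x^n-1)\bmod 2$, follows from the (strictly stronger) chain $a_2(x)\mid a_1(x)\mid g(x)\mid(x^n-1)\bmod 2$ stated in the corollary.

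Chaining these two equivalences yields the desired biconditional: $\mathcal{C}$ is reversible-complement if and only if (1), (2), and (3) all hold. There is no genuine obstacle in this argument; the only point that merits a brief comment is the hypothesis matching between the corollary and Theorem \ref{secondpresentation}, since the generator $a_1(x)$ appearing in the divisibility chain of the corollary does not actually occur in this two-generator presentation. The implication is one-way and harmless, so the combined statement goes through without modification.
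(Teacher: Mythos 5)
Your proposal is correct and matches the paper's intent exactly: the corollary is stated without proof precisely because it is the immediate combination of Theorem \ref{rev-com1} (reversible-complement $\Leftrightarrow$ reversible and $u^2\mathbb{I}(x)\in\mathcal{C}$) with Theorem \ref{secondpresentation} (the reversibility characterization of the two-generator code), which is the chain of equivalences you describe. Your remark that the hypothesis $a_2(x)\mid a_1(x)\mid g(x)\mid(x^n-1)$ is stronger than the divisibility condition actually needed, with $a_1(x)$ playing no role in this presentation, is also accurate.
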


\section{Dual cyclic codes over $\mathcal{R}$}

Let $X=x_0x_1\dots x_{n-1}$ and $Y=y_0y_1\dots y_{n-1}$ be
two elements of $\mathcal{R}^n$. Then we denote the Euclidean
inner product in $\mathcal{R}^n$ as 
$\langle X,Y\rangle_E=x_0y_0+x_1y_1+\cdots+ x_{n-1}y_{n-1}$.
Two elements $X$ and $Y$ are called orthogonal with respect to
the Euclidean inner product, if $\langle X,Y\rangle_E=0$.
The Hermitian inner product is defined as
$\langle X,Y\rangle_H=x_0\overline{y_0}+x_1\overline{y_1}+\cdots+ x_{n-1}\overline{y_{n-1}}$.
 Two elements $X$ and $Y$ are
called orthogonal with respect to the Hermitian
inner product, if $\langle X,Y\rangle_H=0$.
The dual code $\mathcal{C}^\bot$ with respect to the Euclidean inner product of $\mathcal{C}$ is defined as
$$\mathcal{C}^\bot=\{X\in\mathcal{R}^n\mid\langle X,Y\rangle_E=0 \mbox{ for all } Y\in\mathcal{C}\}.$$
The dual code $\mathcal{C}^{\bot H}$ with respect to the Hermitian inner product of $\mathcal{C}$ is defined as
$$\mathcal{C}^{\bot H}=\{X\in\mathcal{R}^n\mid\langle X,Y\rangle_H=0 \mbox{ for all } Y\in\mathcal{C}\}.$$
In the following theorem, the Euclidean dual and the
Hermitian dual of a cyclic code of lenght $n$ over $\mathcal{R}$ are considered. 
\begin{theorem}
Let $\mathcal{C}$ be a cyclic code of lenght $n$ over $\mathcal{R}$. 
The following conditions are equivalent:
\begin{enumerate}
\item $\mathcal{C}$ is reversible;
\item $\mathcal{C}^{{\bot}}$ is reversible;
\item $\mathcal{C}^{{\bot}H}$ is reversible.
\end{enumerate}
\end{theorem}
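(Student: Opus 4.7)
The backbone of the argument is the pair of adjoint identities
$$\langle X^{r},Y\rangle_{E}=\langle X,Y^{r}\rangle_{E},\qquad \langle X^{r},Y\rangle_{H}=\langle X,Y^{r}\rangle_{H},$$
both of which fall out of the definitions after the reindexing $i\mapsto n-1-i$; the Hermitian version also uses that componentwise complementation commutes with reversal.

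For $(1)\Leftrightarrow(2)$ I would run the classical duality argument. If $\mathcal{C}$ is reversible and $X\in\mathcal{C}^{\bot}$, then for every $Y\in\mathcal{C}$ one has $Y^{r}\in\mathcal{C}$, so $\langle X^{r},Y\rangle_{E}=\langle X,Y^{r}\rangle_{E}=0$, whence $X^{r}\in\mathcal{C}^{\bot}$. The converse is obtained by applying the same reasoning to $\mathcal{C}^{\bot}$ and invoking the double-dual identity $(\mathcal{C}^{\bot})^{\bot}=\mathcal{C}$, which is valid because the finite chain ring $\mathcal{R}$ is Frobenius. The implication $(1)\Rightarrow(3)$ is proved in exactly the same way, using the Hermitian adjoint identity in place of the Euclidean one.

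For $(3)\Rightarrow(1)$ my plan is to reduce to the Euclidean case through the enlarged cyclic code
$$\widetilde{\mathcal{C}}\ :=\ \mathcal{C}+\langle u^{2}\mathbb{I}(x)\rangle,$$
which is cyclic because $u^{2}\mathbb{I}(x)$ is fixed by the shift. Using Lemma~\ref{tablenotes}(1), so that $\overline{y}=y+u^{2}$, a short calculation shows
$$\mathcal{C}^{\bot H}=\widetilde{\mathcal{C}}^{\bot},$$
because Hermitian orthogonality with every $Y\in\mathcal{C}$ amounts to Euclidean orthogonality with every element of $\mathcal{C}\cup\{u^{2}\mathbb{I}(x)\}$. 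Hypothesis $(3)$ therefore translates into reversibility of $\widetilde{\mathcal{C}}^{\bot}$, and applying the already-proved $(1)\Leftrightarrow(2)$ to the cyclic code $\widetilde{\mathcal{C}}$ upgrades this to reversibility of $\widetilde{\mathcal{C}}$ itself.

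The main obstacle lies in the final descent, transferring reversibility from $\widetilde{\mathcal{C}}$ down to $\mathcal{C}$. The case $u^{2}\mathbb{I}(x)\in\mathcal{C}$ is immediate since then $\widetilde{\mathcal{C}}=\mathcal{C}$. Otherwise $[\widetilde{\mathcal{C}}:\mathcal{C}]=2$, and because $u^{2}\mathbb{I}(x)$ is palindromic, $\mathcal{C}^{r}$ is a cyclic index-two subcode of $\widetilde{\mathcal{C}}$ of the same $\mathcal{R}$-cardinality as $\mathcal{C}$ and also not containing $u^{2}\mathbb{I}(x)$. The plan would be to show, using the generator structure of $\widetilde{\mathcal{C}}$ provided by Theorem~\ref{firstpresentation}/\ref{secondpresentation} together with the shift-invariance of $u^{2}\mathbb{I}(x)$, that such an index-two cyclic subcode is unique, so that necessarily $\mathcal{C}^{r}=\mathcal{C}$ and $\mathcal{C}$ is reversible.
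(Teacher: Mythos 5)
Your proofs of $(1)\Rightarrow(2)$, $(2)\Rightarrow(1)$ and $(1)\Rightarrow(3)$ are exactly the paper's: the adjoint identities $\langle X^r,Y\rangle_E=\langle X,Y^r\rangle_E$, $\langle X^r,Y\rangle_H=\langle X,Y^r\rangle_H$ plus the Euclidean double dual. Where you differ is $(3)\Rightarrow(1)$: the paper disposes of it in one line by asserting $(\mathcal{C}^{\bot H})^{\bot H}=\mathcal{C}$, while you reroute through the Euclidean dual via $\mathcal{C}^{\bot H}=\widetilde{\mathcal{C}}^{\,\bot}$ with $\widetilde{\mathcal{C}}=\mathcal{C}+\langle u^2\mathbb{I}(x)\rangle$. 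That identity is correct (taking $Y=0\in\mathcal{C}$ forces $u^2\sum_i x_i=0$, after which Hermitian orthogonality to $\mathcal{C}$ is Euclidean orthogonality), and the detour is well motivated: since the ``conjugation'' here is $\overline{a}=a+u^2$, which is not a ring automorphism, the Hermitian double-dual identity the paper invokes is delicate --- one always has $u^2\mathbb{I}(x)\in(\mathcal{C}^{\bot H})^{\bot H}$, so that identity fails whenever $u^2\mathbb{I}(x)\notin\mathcal{C}$. So your route, if completed, would actually be more robust than the paper's.

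But it is not completed: the final descent from ``$\widetilde{\mathcal{C}}$ reversible'' to ``$\mathcal{C}$ reversible'' rests on the unproved claim that $\widetilde{\mathcal{C}}$ has a unique index-two cyclic subcode not containing $u^2\mathbb{I}(x)$, and the tool you point to (the generator-form theorems of Section 3) cannot deliver it --- those results only say when a code presented by given generators is reversible, they do not classify subcodes, and they are stated for even length while the theorem here allows any $n$. That uniqueness claim is in fact true, and here is how to close the gap. Write $n=2^km$ with $m$ odd and $x^n-1=\prod_j f_j^{2^k}$ with $f_j$ distinct irreducibles over $\mathbb{F}_2$, $f_1=x-1$; by CRT, $\mathcal{R}[x]/\langle x^n-1\rangle\cong\prod_j\mathcal{R}[x]/\langle f_j^{2^k}\rangle$, each factor a finite local ring, and the two-element ideal $\langle u^2\mathbb{I}(x)\rangle$ is exactly the unique minimal ideal (socle) of the $f_1$-factor and is zero in the other factors. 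Hence a cyclic code omits $u^2\mathbb{I}(x)$ if and only if its $f_1$-component is zero; this pins down your index-two subcode as $0\times\prod_{j\geq2}I_j$ (uniqueness), and in fact makes the detour through subcodes unnecessary: reversal acts by $x\mapsto x^{-1}$, permuting the factors according to $f_j\mapsto f_j^*$ and fixing the $f_1$-factor, whose component in $\mathcal{C}$ is $0$ and in $\widetilde{\mathcal{C}}$ is the (characteristic) socle, so $\widetilde{\mathcal{C}}$ is reversible if and only if $\mathcal{C}$ is. Without an argument of this kind your last step is a genuine gap.
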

\begin{proof}
$(1)\Rightarrow(2)$ Let $X=x_0x_1\dots x_{n-1}$ be an arbitrary codeword in $\mathcal{C}^\bot$. We claim that $X^{r}\in\mathcal{C}^\bot$. Get a codeword $Y=y_0y_1\dots y_{n-1}\in\mathcal{C}$. Since $\mathcal{C}$
is reversible, then $Y^{r}={y_{n-1}}\dots{y_1}~{y_0}\in\mathcal{C}$. 
Therefore $$\langle X, Y^{r}\rangle_{E}=x_{n-1}{y_0}+x_{n-2}{y_1}+\dots+x_0{y_{n-1}}=\langle X^{r}, Y\rangle_E=0.$$
Consequently the claim holds.\\
$(1)\Rightarrow(3)$
Let $X=x_0x_1\dots x_{n-1}$ be an arbitrary codeword in $\mathcal{C}^{\bot H}$ and $Y=y_0y_1\dots y_{n-1}$ be a codeword in $\mathcal{C}$. Since $\mathcal{C}$
is reversible, then $Y^{r}={y_{n-1}}\dots{y_1}~{y_0}\in\mathcal{C}$. 
Therefore $$\langle X, Y^{r}\rangle_H=x_{n-1}\overline{y_0}+x_{n-2}\overline{y_1}+\dots+x_0\overline{y_{n-1}}=\langle X^{r}, Y\rangle_H=0.$$
Consequently $X^{r}\in\mathcal{C}^{\bot H}$, which shows that $\mathcal{C}^{{\bot}H}$ is reversible.\\
$(2)\Rightarrow(1)$ and $(3)\Rightarrow(1)$ By the fact that for every cyclic code $\mathcal{C}$ we have $(\mathcal{C}^{\bot})^\bot=\mathcal{C}$
and $(\mathcal{C}^{{\bot}H})^{{\bot}H}=\mathcal{C}$.
\end{proof}

\begin{theorem} 
Let $\mathcal{C}=\langle g(x)+up_1(x)+u^2p_2(x), ua_1(x)+u^2q(x), u^2a_2(x)\rangle$ be a cyclic code of even lenght $n$ over $\mathcal{R}$ with
$a_2(x)\mid a_1(x)\mid g(x)\mid(x^n-1)$ mod 2, also $a_1(x)\mid p_1(x)$,~ $a_2(x)\mid p_2(x)$ and $a_2(x)\mid q(x)$.
Moreover, assume that $deg(g(x))=r$, $deg(p_1(x))=s$ and $deg(p_2(x))=t$ where $r>max\{s,t\}$, also $deg(a_1(x))=l$, $deg(q(x))=m$  with $l>m$.
If $\mathcal{C}^\bot=\langle \widehat{g}(x)+u\widehat{p_1}(x)+u^2\widehat{p_2}(x), u\widehat{a_1}(x)+u^2\widehat{q}(x), u^2\widehat{a_2}(x)\rangle$ is the dual cyclic code of $\mathcal{C}$, then
\begin{enumerate}
\item $\widehat{g}(x)=\frac{x^n-1}{a_2^*(x)}\alpha(x)$ for some $\alpha(x)\in\mathbb{F}_2[x]$.
\item $\widehat{a_1}(x)=\frac{x^n-1}{a_1^*(x)}\beta(x)$ for some $\beta(x)\in\mathbb{F}_2[x]$.
\item $\widehat{a_2}(x)=\frac{x^n-1}{g^*(x)}\gamma(x)$ for some $\gamma(x)\in\mathbb{F}_2[x]$.
\item $\widehat{q}(x)=\frac{x^n-1}{g^*(x)}\mu(x)$ for some $\mu(x)\in\mathbb{F}_2[x]$.
\item $\widehat{p_1}(x)=\frac{x^n-1}{a_1^*(x)}\nu(x)$ for some $\nu(x)\in\mathbb{F}_2[x]$.
\item $\widehat{p_2}(x)=\frac{x^n-1}{g^*(x)}\eta(x)$ for some $\eta(x)\in\mathbb{F}_2[x]$.
\end{enumerate}
\end{theorem}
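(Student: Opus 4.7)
My strategy rests on the standard characterization of the Euclidean dual of a cyclic code: a polynomial $B(x)\in\mathcal{R}[x]/\langle x^n-1\rangle$ lies in $\mathcal{C}^{\bot}$ if and only if $A(x)B^{r}(x)\equiv 0\pmod{x^n-1}$ for every $A(x)\in\mathcal{C}$. This follows from the observation that the coefficient of $x^{n-1}$ in $A(x)B^{r}(x)$ modulo $x^n-1$ is the Euclidean inner product $\langle A,B\rangle_{E}$, and that the cyclic shifts of $A$ (which remain in $\mathcal{C}$) realize all other shifted inner products. Since $\gcd(x,x^n-1)=1$ in $\mathbb{F}_{2}[x]$, the element $x$ is a unit modulo $x^n-1$, so Lemma~\ref{star}(1) allows me to replace $B^{r}$ by $B^{*}$: the working criterion becomes $A(x)B^{*}(x)\equiv 0\pmod{x^n-1}$ for every pair of generators.

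Using Lemma~\ref{equalityreciprocal} (and its two-term analogue for $H_{2}$), with $r',s',t',l',m'$ denoting the degrees of the corresponding hatted polynomials, I would first record $H_{1}^{*}=\widehat{g}^{*}+ux^{r'-s'}\widehat{p_{1}}^{*}+u^{2}x^{r'-t'}\widehat{p_{2}}^{*}$, $H_{2}^{*}=u\widehat{a_{1}}^{*}+u^{2}x^{l'-m'}\widehat{q}^{*}$, and $H_{3}^{*}=u^{2}\widehat{a_{2}}^{*}$. For each of the nine products $A_{i}H_{j}^{*}$, with $A_{1}=g+up_{1}+u^{2}p_{2}$, $A_{2}=ua_{1}+u^{2}q$, $A_{3}=u^{2}a_{2}$, I would expand modulo $u^{3}$ and read off the coefficients of $u^{0},u^{1},u^{2}$, each of which must vanish modulo $x^n-1$. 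The three clean cases immediately yield claims (1)--(3): $A_{3}H_{1}^{*}$ forces $a_{2}\widehat{g}^{*}\equiv 0$, $A_{2}H_{2}^{*}$ forces $a_{1}\widehat{a_{1}}^{*}\equiv 0$, and $A_{1}H_{3}^{*}$ forces $g\widehat{a_{2}}^{*}\equiv 0$. Passing to reciprocals, using the self-reciprocity of $x^{n}-1$ over $\mathbb{F}_{2}$ together with the multiplicativity of $*$ on pairs of divisors of $x^{n}-1$ (all of which have nonzero constant term), then produces the divisor forms (1), (2), and (3).

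For the remaining claims (4), (5), (6), I would chain the preceding conclusions with the divisibility hypotheses $a_{2}\mid a_{1}\mid g$, $a_{1}\mid p_{1}$, $a_{2}\mid p_{2}$, $a_{2}\mid q$ to annihilate cross-terms. For instance, the $u^{2}$-coefficient of $A_{2}H_{1}^{*}$ reads $a_{1}x^{r'-s'}\widehat{p_{1}}^{*}+q\widehat{g}^{*}\equiv 0\pmod{x^n-1}$; claim (1) gives $\frac{x^{n}-1}{a_{2}}\mid\widehat{g}^{*}$, and combined with $a_{2}\mid q$ this forces $q\widehat{g}^{*}\equiv 0$, leaving $a_{1}\widehat{p_{1}}^{*}\equiv 0$ and hence claim (5). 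Claim (4) follows in the same way from the $u^{2}$-coefficient of $A_{1}H_{2}^{*}$, using claim (2) and $a_{1}\mid p_{1}$ to kill $p_{1}\widehat{a_{1}}^{*}$; claim (6) comes from the $u^{2}$-coefficient of $A_{1}H_{1}^{*}$, using the newly established claim (5) together with $a_{1}\mid p_{1}$ to kill $p_{1}x^{r'-s'}\widehat{p_{1}}^{*}$, and claim (1) together with $a_{2}\mid p_{2}$ to kill $p_{2}\widehat{g}^{*}$. The main obstacle I anticipate is organizational rather than mathematical: the chain of deductions must be executed in an order in which each step invokes only previously established divisibilities, the monomial factors $x^{r'-s'},x^{r'-t'},x^{l'-m'}$ must be tracked (they are harmless because $x$ is a unit in the quotient), and one must consistently pass between divisibility on a polynomial and on its reciprocal without losing a degree.
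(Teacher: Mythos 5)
Your proposal is correct and is essentially the paper's own argument: both rest on the annihilation criterion for duals of cyclic codes (the paper invokes Lemma 4.4.8 of Huffman--Pless, you rederive it from the shifted inner products and Lemma \ref{star}), both apply it to the pairwise products of the three generators of $\mathcal{C}$ with the three generators of $\mathcal{C}^{\bot}$, and both chain parts (1)--(3) with the hypotheses $a_2\mid a_1\mid g$, $a_1\mid p_1$, $a_2\mid p_2$, $a_2\mid q$ to kill the cross terms and obtain (4)--(6) in the same order. The only deviation is that you place the reciprocal on the dual generators rather than on the original ones (the paper's choice lets the stated degree hypotheses $r>\max\{s,t\}$ and $l>m$ justify Lemma \ref{equalityreciprocal} directly and avoids any un-starring), so in your version you should note that the exponents $r'-s'$, $r'-t'$, $l'-m'$ need not be nonnegative --- in every case each $u$-component of $H_j^*$ is a power of $x$ times the reciprocal of the corresponding hatted polynomial, which is all the argument uses --- and that, e.g., $\frac{x^n-1}{a_2}\mid\widehat{g}^*$ yields $\frac{x^n-1}{a_2^*}\mid\widehat{g}$ because $\bigl(\frac{x^n-1}{a_2}\bigr)^*=\frac{x^n-1}{a_2^*}$ and $(\widehat{g}^*)^*$ divides $\widehat{g}$.
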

\begin{proof}
(1) Since $\widehat{g}(x)+u\widehat{p_1}(x)+u^2\widehat{p_2}(x)\in\mathcal{C}^\bot$ and
$u^2a_2(x)\in\mathcal{C}$, then by Lemma 4.4.8 of \cite{HP}, we have that
$(\widehat{g}(x)+u\widehat{p_1}(x)+u^2\widehat{p_2}(x))(u^2a_2^*(x))=u^2\widehat{g}(x)a_2^*(x)=0$
mod $\langle x^n-1\rangle$.
Thus $\widehat{g}(x)a_2^*(x)=0$ mod $\langle x^n-1\rangle$. Notice that $a(x)\mid(x^n-1)$ implies $a^*(x)\mid (x^n-1)^*=x^n-1$.  So $\widehat{g}(x)=\frac{x^n-1}{a_2^*(x)}\alpha(x)$ for some $\alpha(x)\in\mathbb{F}_2[x]$.

(2) Again, by Lemma 4.4.8 of \cite{HP},
$$(u\widehat{a_1}(x)+u^2\widehat{q}(x))(ua_1(x)+u^2q(x))^*=u^2\widehat{a_1}(x)a_1^*(x)=0
\mbox{ mod }\langle x^n-1\rangle.$$ Hence
$\widehat{a_1}(x)=\frac{x^n-1}{a_1^*(x)}\beta(x)$ 
for some $\beta(x)\in\mathbb{F}_2[x]$.

(3) Since $(u^2\widehat{a_2}(x))(g(x)+up_1(x)+u^2p_2(x))^*=u^2\widehat{a_2}(x)g^*(x)=0$ mod $\langle x^n-1\rangle$, then $\widehat{a_2}(x)=\frac{x^n-1}{g^*(x)}\gamma(x)$ for some $\gamma(x)\in\mathbb{F}_2[x]$.

(4) Notice that 
\begin{eqnarray*}
(u\widehat{a_1}(x)+u^2\widehat{q}(x))(g(x)+up_1(x)+u^2p_2(x))^*&=&
u\widehat{a_1}(x)g^*(x)+u^2\widehat{a_1}(x)x^{r-s}p_1^*(x)\\
&+&u^2\widehat{q}(x)g^*(x)=0 \mbox{ mod } \langle x^n-1\rangle.
\end{eqnarray*}
Since $a_1^*(x)\mid g^*(x)$ and $a_1^*(x)\mid p_1^*(x)$, by part (2) we get 
$u\widehat{a_1}(x)g^*(x)=u^2\widehat{a_1}(x)p_1^*(x)=0$ mod $\langle x^n-1\rangle$.
Hence $u^2\widehat{q}(x)g^*(x)=0$ mod $\langle x^n-1\rangle$, and so $\widehat{q}(x)=\frac{x^n-1}{g^*(x)}\mu(x)$ for some $\mu(x)\in\mathbb{F}_2[x]$.

(5) We have that
\begin{eqnarray*}
(\widehat{g}(x)+u\widehat{p_1}(x)+u^2\widehat{p_2}(x))(ua_1(x)+u^2q(x))^*&=&u\widehat{g}(x)a_1^*(x)+u^2\widehat{g}(x)x^{l-m}q^*(x)\\&+&u^2\widehat{p_1}(x)a_1^*(x)=0
\mbox{ mod } \langle x^n-1\rangle. 
\end{eqnarray*}
Since $a_2^*(x)\mid a_1^*(x)$ and $a_2^*(x)\mid q^*(x)$, by part (1) it follows 
$u\widehat{g}(x)a_1^*(x)=u^2\widehat{g}(x)q^*(x)=0$ mod $\langle x^n-1\rangle$.
Consequently
$\widehat{p_1}(x)=\frac{x^n-1}{a_1^*(x)}\nu(x)$  for some $\nu(x)\in\mathbb{F}_2[x]$.

(6)  We can easily see that
{\small
\begin{eqnarray*}
(\widehat{g}(x)+u\widehat{p_1}(x)+u^2\widehat{p_2}(x))(g(x)+up_1(x)+u^2p_2(x))^*
&=&\widehat{g}(x)g^*(x)+u\widehat{g}(x)x^{r-s}p_1^*(x)\\
&+&u^2\widehat{g}(x)x^{r-t}p_2^*(x)
+u\widehat{p_1}(x)g^*(x)\\
&+&u^2\widehat{p_1}(x)x^{r-s}p_1^*(x)+u^2\widehat{p_2}(x)g^*(x)\\
&=&0\mbox{ mod } \langle x^n-1\rangle. 
\end{eqnarray*}}
By part (1) we have that $\widehat{g}(x)g^*(x)=\widehat{g}(x)p_1^*(x)
=\widehat{g}(x)p_2^*(x)=0$ mod $\langle x^n-1\rangle$. Also, part (5) implies $\widehat{p_1}(x)g^*(x)
+\widehat{p_1}(x)p_1^*(x)=0$ mod $\langle x^n-1\rangle$. Therefore $\widehat{p_2}(x)=\frac{x^n-1}{g^*(x)}\eta(x)$ for some $\eta(x)\in\mathbb{F}_2[x]$.
\end{proof}

\section{Conclusion}
We have investigated some properties of generator polynomials of reversible and reversible-complement cyclic codes with even lenght over the ring $\mathbb{F}_2+u\mathbb{F}_2+u^2\mathbb{F}_2$ using the mathematical structure of DNA. We have found a condition under which a reversible cyclic code is reversible-complement. Moreover, we have presented the relationship between the generator polynomials of cyclic codes over $\mathbb{F}_2+u\mathbb{F}_2+u^2\mathbb{F}_2$ and their duals.

\vspace{5mm} \noindent \footnotesize 
\begin{minipage}[b]{10cm}
Hojjat Mostafanasab \\
Department of Mathematics and Applications, \\ 
University of Mohaghegh Ardabili, \\ 
P. O. Box 179, Ardabil, Iran. \\
Email: h.mostafanasab@gmail.com, \hspace{1mm} h.mostafanasab@uma.ac.ir
\end{minipage}\\

\vspace{5mm} \noindent \footnotesize 
\begin{minipage}[b]{10cm}
Ahmad Yousefian Darani \\
Department of Mathematics and Applications, \\ 
University of Mohaghegh Ardabili, \\ 
P. O. Box 179, Ardabil, Iran. \\
Email: youseffian@gmail.com, \hspace{1mm} yousefian@uma.ac.ir
\end{minipage}\\


\begin{thebibliography}{99}
	
\bibitem{tax} Abualrub T, Ghrayeb A and  Zeng X, Construction of cyclic codes over $GF(4)$ for DNA computing, \emph{J. Franklin Institute}, 2006, {\bf 343}(4-5): 448--457.	
	
\bibitem{ao} Abualrub T and  Oehmke R, On the Generators of $\mathbb{Z}_4$ Cyclic Codes of Length $2^e$, \emph{IEEE Trans. Inform. Theory}, 2003, {\bf 49}: 2126-2133.

\bibitem{absi} Abualrub T and  Siap I, Cyclic codes over the rings $\mathbb{Z}_2 + u\mathbb{Z}_2$ and $\mathbb{Z}_2 + u\mathbb{Z}_2 + u^2\mathbb{Z}_2$, \emph{Designs,Codes and Cryptography}, 2007,  {\bf 42}: 273-287.
	
\bibitem{la} Adleman L, Molecular computation of the solution to combinatorial problems, \emph{Science}, 1994, {\bf 266}: 1021--1024.	
	
\bibitem{aei} Bayram A, Oztas E S and Siap I, Codes over $\mathbb{F}_v+u\mathbb{F}_4$ and some DNA applications, \emph{Designs,Codes and Cryptography}, 2015, DOI 10.1007/s10623-015-0100-8.	
	
\bibitem{Ceng} Cengellenmis Y, On $(1-u^2)$-cyclic and cyclic codes over $\mathbb{F}_p+u\mathbb{F}_p+u^2\mathbb{F}_p$, 
{\it J. Disc. Math. Sci. Cryptogr.}, 2009, {\bf 12}(2): 239--243.	
	
\bibitem{Cen} Cengellenmis Y, On some special codes over $\mathbb{F}_p+u\mathbb{F}_p+u^2\mathbb{F}_p$, {\it Appl. Math. Comput.}, 2011, {\bf 218}: 720--722.	
	
\bibitem{d} Dinh H Q, Constacyclic codes of length $p^s$ over $\mathbb{F}_{p^m} + u\mathbb{F}_{p^m}$, \emph{Journal of Algebra}, 2010, {\bf 324}: 940--950.	
	
\bibitem{pd} Gaborit P and King O D, Linear construction for DNA codes, \emph{Theor. Computer Science}, 2005, {\bf 334}(1-3): 99-113.	
	
\bibitem{ktp} Guenda K and Gulliver T A, Construction of cyclic codes over $\mathbb{F}_2 + u\mathbb{F}_2$ for DNA
computing, \emph{AAECC}, 2013, {\bf 24}: 445--459. 	
	
\bibitem{ktp2} Guenda K, Gulliver T A and Sole P, On cyclic DNA codes, \emph{Proc. IEEE Int. Symp.
Inform. Theory}, Istanbul, 2013, 121--125.	
	
\bibitem{HP} Huffman W C and Pless V, {\it Fundamentals of Error-Correcting Codes}, Cambridge University Press, Cambridge, 2003.	
	
\bibitem{LW} Liang J and Wang L, On cyclic DNA codes over $\mathbb{F}_2+u\mathbb{F}_2$, \emph{J. Appl. Math. Comput.}, 
 2015, DOI 10.1007/s12190-015-0892-8.	
	
\bibitem{rj} Lipton R J, DNA solution of hard computational problems, \emph{Science}, 1995, {\bf 268}: 542--545. 	
	
\bibitem{OSB} Oztas E S, Siap I and Yildiz B, 
Reversible Codes and Applications to DNA, {\it Lecture Notes in Computer Science}, 2014, {\bf 8592}: 124--128.	
	
\bibitem{vz} Pless V S and Qian Z, Cyclic codes and quadratic residue codes over $\mathbb{Z}_4$, \emph{IEEE Trans. Inform. Theory}, 
1996, {\bf 42}(5): 1594--1600.	
	
\bibitem{QZZ} Qian J F, Zhang L N and Zhu S X, Constacyclic and cyclic codes over $\mathbb{F}_2+u\mathbb{F}_2+u^2\mathbb{F}_2$, 
{\it IEICE Transactions}, 2006, {\bf 89-A} (6): 1863--1865.	
	
\bibitem{ita} Siap I, Abualrub T and Ghrayeb A, Cyclic DNA codes over the ring $\mathbb{F}_2[u]/(u^2-1)$ based on the deletion distance, \emph{J. Franklin Institute}, 2009, {\bf 346}: 731-740.	
	
\bibitem{SK} Singh A K and Kewat P K, On cyclic codes over the ring $\mathbb{Z}_p[u]/\langle u^k\rangle$, {\it Des. Codes Cryptogr.},
2013, DOI 10.1007/s10623-013-9843-2.	
	
\bibitem{bn} Yildiz B and Aydin N, On cyclic codes over $Z_4+uZ_4$ and their $Z_4$-images, \emph{Int. J.  Inf. Coding Theory}, 
2014, {\bf 2}(4): 226--237.	
	
\bibitem{yk} Yildiz B and Karadeniz S, Linear codes over $\mathbb{Z}_4+u\mathbb{Z}_4$, MacWilliams identities, projections, and formally self-dual codes, \emph{Finite Fields Appl.}, 2014, {\bf 27}: 24--40.

\bibitem{YS} 
Yildiz B  and Siap I, Cyclic codes over $\mathbb{F}_2[u]/\langle u^4-1\rangle$ and applications to DNA codes, {\it Comp. Math. Appl.,}
2012, {\bf 63}: 1169--1176.
\end{thebibliography}
\end{document}